\documentclass[journal, onecolumn, draftclsnofoot]{IEEEtran}

\usepackage{verbatim}
\usepackage{amsfonts,amsmath,mathrsfs,amssymb,amsbsy,amsthm}
\usepackage[final]{graphicx}
\usepackage{times}
\usepackage{epstopdf}
\usepackage{subcaption}
\usepackage[font={scriptsize}]{caption}
\usepackage[numbers,sort&compress]{natbib}

\usepackage{lipsum}% http://ctan.org/pkg/lipsum

\newtheorem{thm}{Theorem}

\newtheorem{lem}{Lemma}

\newtheorem{defn}{Definition}
\newtheorem{remk}{\textit{Remark}}
\newtheorem{claim}{Claim}

\theoremstyle{definition}
\newtheorem{ex}{\textbf{Example}}

%\allowdisplaybreaks

%-------------------------------------------------------------------------
\begin{document}
%-------------------------------------------------------------------------
\include{commands}
%-------------------------------------------------------------------------
%paper title
%-------------------------------------------------------------------------
\title{When Does an Ensemble of Matrices with Randomly Scaled Rows Lose Rank?}

\author{ 
  \IEEEauthorblockN{Navid Naderializadeh, Aly El Gamal, and A. Salman Avestimehr\\}
  \IEEEauthorblockA{
    Department of Electrical Engineering,
    University of Southern California\\
    Emails: naderial@usc.edu, aelgamal@usc.edu, avestimehr@ee.usc.edu} 
}

\maketitle
%-------------------------------------------------------------------------
% make the title area
%-------------------------------------------------------------------------
\begin{abstract}
We consider the problem of determining rank loss conditions for a concatenation of full-rank matrices, such that each row of the composing matrices is scaled by a random coefficient. This problem has applications in wireless interference management and recommendation systems. We determine necessary and sufficient conditions for the design of each matrix, such that the random ensemble will almost surely lose rank by a certain amount. The result is proved by converting the problem to determining rank loss conditions for the union of some specific matroids, and then using tools from matroid and graph theories to derive the necessary and sufficient conditions. As an application, we discuss how this result can be applied to the problem of topological interference management, and characterize the linear symmetric degrees of freedom for a class of network topologies.\end{abstract}

\section{Introduction}\label{sec:intro}
%- In this paper we consider the problem of Title. More specifically, we consider the problem in Figure x, in which K users design matrices B_1, …, and a receiver obtains
%
We consider the problem in Figure~\ref{fig:mac}, in which $K$ users design matrices $B_1,\ldots,B_K$, where $B_i$ is an $n \times m_i$ matrix, designed by the $i^{\textrm{th}}$ user, with full column rank $m_i \leq n$. A destination obtains the matrix $B_D=\left[\Lambda_1 B_1 \quad \Lambda_2 B_2  \quad \cdots \quad  \Lambda_K B_K \right]$, where $\Lambda_i$ is an $n \times n$ diagonal matrix with random diagonal entries, and the set of all random coefficients is drawn from a continuous joint distribution.
% (joint probability density function). 
It is easy to see that $R=\min\left(\sum_{i=1}^K m_i, n\right)$ is the maximum rank that the matrix $B_D$ can have. However, by a careful design, one can reduce the rank of this matrix. In particular, the question that we address in this work is: Under what conditions on the design of the matrices $B_1,\ldots,B_K$ will the matrix $B_D$ lose rank by $\tau$ almost surely, i.e., $\text{rank}(B_D) \overset{a.s.}{\leq} R - \tau$, while each of the matrices $B_1,\ldots,B_K$ has full column rank?

\begin{figure}[h]
\centering
\includegraphics[trim = 2.71in 2.95in 1.82in 2.81in, clip,width=0.4\textwidth]{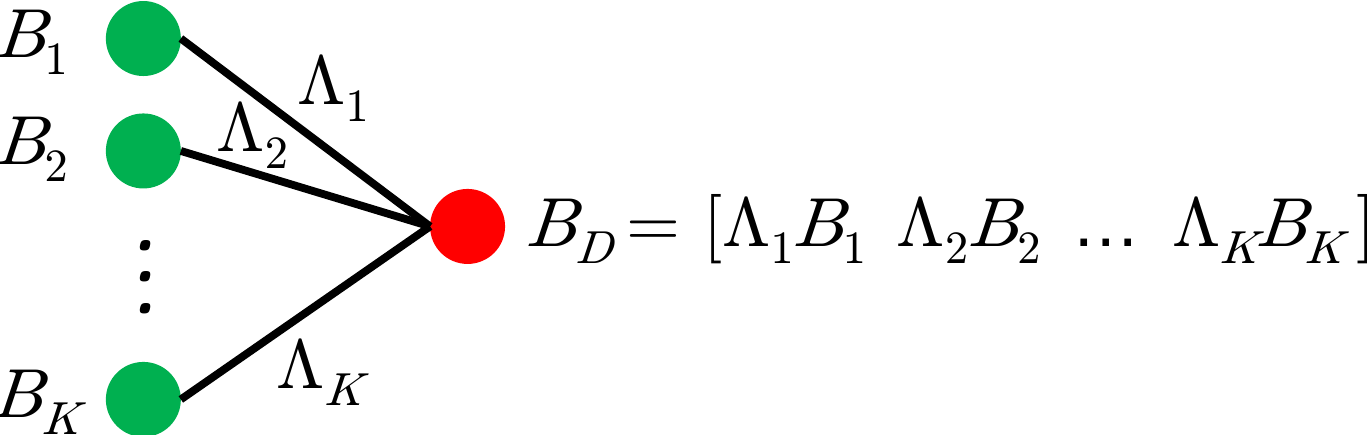}
\caption{Multiple users, each one having a matrix $B_i$ and another user receiving a concatenation of these matrices with randomly-scaled rows.}
\label{fig:mac}
\end{figure}

%- This problem arises naturally in understanding the fundamentals of interference management in wireless networks, in particular the problem of topological interference management (e.g., \cite{}), in which … . In this setting, B_1 represents … \Lambda_i represents … [ x x x] represents . Therefore, the aforementioned problem would xxx how much alignment is feasible if …
%

The aforementioned problem arises naturally when understanding the fundamentals of interference management in wireless networks, in particular for the case where no information about the channel state is assumed to be available at the transmitters except for the knowledge about the network topology. In this setting, the $K$ users in Figure \ref{fig:mac} represent transmitters that are interfering at a single receiver. The matrix $B_i$ represents the beamforming matrix at the $i^{\textrm{th}}$ transmitter and the $j^{\textrm{th}}$ diagonal element of the matrix $\Lambda_i$ represents the channel coefficient in the $j^{\textrm{th}}$ time slot. The column span of the matrix $B_D$ represents the space of the received interference. The aforementioned problem is that of determining the conditions on the beamforming matrices that result in an alignment of the interference caused by users $1$ to $K$ in a subspace whose dimension is at most $R-\tau$.

%- This problem can also appear in recommendation systems ...
%
Another motivating example arises in recommendation systems, where for a fixed set of items, each matrix $B_i$ represents the ratings given by the $i^{\textrm{th}}$ group of users. 
Each row represents one user's ratings for the set of items, and each diagonal element of $\Lambda_i$ represents a random scaling factor that reflects the user's own bias. For example, one user can give a rating of $10$ to his most favorite movie and $5$ for his least favorite while another user can have a highest rating of $8$ and a lowest rating of $4$. The problem in this setting would be to understand conditions on the ratings of each user group that result in a rank loss of the ensemble. Understanding conditions of rank loss in this application is useful for completion algorithms that recover missing ratings. In other words, the answer of our question in this setting specifies the structure of matrix entries which suffice to complete the entire rating matrix.

%what information about user preferences is needed to know how many matrix entries should be input to a completion algorithm.

As a special case of this problem, one can consider each $B_i$ being only a column vector. This is equivalent to the case where instead of each row, each individual element is scaled by a random coefficient. In this case, the problem can be shown to have a combinatorial structure, depending on the position of zero/non-zero elements of $B_i$'s. In particular, it can be verified that in this case, the ensemble matrix $B_D$ will lose rank by $\tau$ (i.e., $\text{rank}(B_D)\leq R-\tau$) if and only if there does not exist a matching of size greater than $R-\tau$ between rows and columns of $B_D$, where a row is connected to a column if and only if the element with the corresponding row and column indices is non-zero (see e.g.,\cite{path-matching,lovasz,factor} and also the Zippel-Schwartz Lemma \cite{sch,zip}). However, once $B_i$'s are not column vectors, the problem becomes much more complicated since there is a structure in the random scaling; elements of the same row in each matrix are scaled by the same coefficient.

%We generalize this condition to our considered setting and note that our problem is more complicated 

%- In this paper we characterize necessary and sufficient conditions for B_1 … by relating the problem to … In particular, we first show
%
In this paper, we characterize a necessary and sufficient condition for rank-loss of the matrix $B_D$. In particular, we determine under what condition on $B_i$'s, the matrix $B_D$ loses rank by $\tau$ almost surely. The proof consists of a succession of equivalence steps. In the first step, we connect the rank of $B_D$ to the rank of the union of certain matroids. This is first done by expanding the determinants of specific submatrices of $B_D$ and then using the Zippel-Schwartz Lemma to express the almost-sure rank loss condition of $B_D$ in terms of the products of the determinants of certain submatrices of $B_i$'s being equal to zero. Then we use the notion of \emph{sparse subspaces} in order to connect the rank loss of $B_D$ to the column span of $B_i$'s and then to the rank loss of the union of carefully-defined matroids. After representing the rank condition of $B_D$ in terms of the rank loss of the union of certain matroids, in the second step, we make use of the matroid union theorem. This theorem enables us to derive an equivalent condition on the structure of $B_1,...,B_K$ for the rank loss of the union of matroids.

In the final step of the proof, we simplify the above condition by constructing appropriate bipartite graphs in which one partite set represents the column vectors of a carefully chosen set of bases for the column span of each of the matrices $B_1,...,B_K$ and the other partite set represents the row indices. A column vector vertex is connected to a row vertex if and only if the vector has a non-zero entry in the corresponding row. We use a variation of the Hall's marriage theorem to reach a final condition expressed through matching sizes on the constructed graphs.

As an application of the derived rank-loss condition, we utilize the result in the context of the topological interference management problem. This problem focuses on the scenario in which interference management primarily relies on a coarse knowledge about channel states in the network, namely the  ``topology'' of the network. Network topology simply refers to the 1-bit feedback information for each link between each transmitter and each receiver, indicating whether or not the signal of the transmitter is received above the noise floor at the corresponding receiver.

There have been several prior works in the literature that have considered similar scenarios. In~\cite{localview}, it is assumed that transmitters are aware of the network topology as well as the actual channel gains within a local neighborhood; the concept of normalized capacity was introduced by characterizing the maximum achievable rate using this channel knowledge as a fraction of the achievable rate using global channel state information. In \cite{jafar}, the authors considered a more restrictive scenario, called topological interference management (TIM), in which the transmitters are only aware of the topology with no information about the channel coefficients. It has been shown  that if the channel gains in the network remain constant for a sufficiently large time, then topological interference management is closely connected to the classical index coding problem (see, e.g., \cite{birk,baryossef,rouayheb,local_coloring_index_coding}), and via this connection, a class of linear interference management schemes has been introduced which rely on the network topology knowledge to align the interference over time at unintended receivers. In~\cite{topology,topology_isit}, the authors considered a class of retransmission-based schemes that only allows transmitters to resend their symbols in order to assist with the neutralization of interference at the receivers. Besides their simplicity, these schemes are robust to channel variations over time and were shown to be optimal in terms of the symmetric degrees-of-freedom (DoF) in many classes of topologies via the outer bounds developed in \cite{topology,topology_isit}.

In this paper, we use our derived rank loss condition in order to solve two problems in the topological interference management framework. In this setting, Figure~\ref{fig:mac} can be seen as a set of $K$ transmitters causing interference at a destination, and the considered rank-loss problem can be seen as reducing the dimension of the subspace occupied by interference at the receiver through careful design of the beamforming matrices at the transmitters.

We first characterize the ``best'' topologies for topological interference management; i.e., the topologies for which half symmetric DoF is achievable for the case of time-varying channels. It is easy to see that in topologies with at least one interference link, the symmetric DoF is upper bounded by $\frac{1}{2}$. Thus, the topologies in which half symmetric DoF is achievable represent the best topologies that one can hope for (from the degrees-of-freedom perspective). For the case where the channel gains in the network are assumed to remain constant for a long-enough period of time, the necessary and sufficient condition on the network topology for achieving half symmetric DoF was characterized in \cite{jafar}. For the case of case of time-varying channels, a sufficient condition for the achievability of half symmetric DoF was derived in \cite{multialign}. In this work, we close the gap in the results on half symmetric DoF by introducing a necessary and sufficient condition under which the symmetric DoF of $\frac{1}{2}$ is achievable for the case of time-varying channels (i.e., without requiring the channels to remain fixed for a long enough time).

Second, we use our condition to characterize the linear symmetric degrees of freedom (DoF) for a class of network topologies with exclusive interference sets. This helps to resolve the characterization of the symmetric DoF for a set of previously open problems considered in~\cite{topology}.

\section{Problem Formulation and Main Result}\label{sec:model}

Consider $K$ matrices $B_1,\ldots,B_K$, where each matrix $B_i, i\in[K]$ has size $n\times m_i$, $m_i\leq n$ (we use $[K]$ to denote the set $\{1,\ldots,K\}$ for any positive integer $K$). Without loss of generality, we assume that each matrix is full-column rank, since the linearly-dependent columns can be removed from each matrix. Furthermore, consider $K$ diagonal matrices $\Lambda_1,\ldots,\Lambda_K$, each of size $n\times n$, where their diagonal elements are drawn from a joint continuous distribution.

In this paper, the problem under consideration is the rank loss of the matrix $B_D=\left[\Lambda_1 B_1 \quad \Lambda_2 B_2  \quad \cdots \quad  \Lambda_K B_K \right]$. To be precise, we aim to find an equivalent condition for when
\begin{align}\label{eq:rankloss}
\text{rank} ([\Lambda_1 B_1 ~~\Lambda_2 B_2~~ \ldots ~~ \Lambda_K B_K])\overset{a.s.}{\leq} R -\tau,
\end{align}
where $R=\min\left(\sum_{i=1}^K m_i,n\right)$ denotes the maximum possible rank of $[\Lambda_1 B_1 ~~\Lambda_2 B_2~~ \ldots ~~ \Lambda_K B_K]$ and $\tau\in \mathbb{Z}^+$.

\textbf{Notation:} For a matrix $B\in\mathbb{R}^{n\times m}$, we use  calligraphic $\mathcal{B}$ to denote the subspace in $\mathbb{R}^n$ spanned by the columns of $B$. Also, for any $X\subseteq[n]$ and $Y\subseteq[m]$, $B_{X,Y}$ denotes the submatrix of $B$ created by removing the rows with indices outside $X$ and removing the columns with indices outside $Y$, and $B_{*,Y}$ denotes the submatrix of $B$ created by removing the columns with indices outside $Y$. Besides, for any $X\subseteq[n]$, $P_X^K$ denotes the set of all partitions of $X$ to $K$ disjoint subsets (each one possibly empty). Finally, for any $J\subseteq [n]$, $J^c$ denotes the complement of $J$ in $[n]$ (i.e., $[n]\setminus J$) and $\mathcal{S}_J$ denotes the subspace of $\mathbb{R}^n$ spanned by the columns of the $n\times n$ identity matrix with indices in $J$. In other words, $\mathcal{S}_J$ is the subspace of $\mathbb{R}^n$ which includes all the vectors that have zero entries in $J^c$. We call $\mathcal{S}_J$ the \emph{sparse subspace} of the set $J$.

We now state the main result in the following theorem.
\begin{thm}\label{thm:main}
The following two statements are equivalent.
\begin{gather}
\emph{rank} ([\Lambda_1 B_1 ~~\Lambda_2 B_2~~ ... ~~ \Lambda_K B_K])\overset{a.s.}{\leq}R -\tau.\tag{C1}\label{eq:original}\\
\forall Y_i\subseteq[m_i], i \in [K] \emph{ s.t. }\sum_{i=1}^K |Y_i|=R, \exists J \subseteq [n]: \sum_{i=1}^K \dim(\mathcal{S}_{J} \cap \mathcal{B}_{i,*,Y_i}) \geq |J|+\tau.\tag{C2}\label{eq:result}
\end{gather}
\end{thm}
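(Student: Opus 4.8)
The plan is to prove the equivalence through a chain of reformulations matching the three-step outline above. First, I would note that, for a fixed realization of the $\Lambda_i$, $\text{rank}(B_D)\leq R-\tau$ holds if and only if for every choice of $Y_i\subseteq[m_i]$ with $\sum_{i=1}^K|Y_i|=R$ the $n\times R$ submatrix $M:=[\Lambda_1 B_{1,*,Y_1}~\cdots~\Lambda_K B_{K,*,Y_K}]$ satisfies $\text{rank}(M)\leq R-\tau$: this is immediate when $\sum_i m_i=R$, and when $R=n<\sum_i m_i$ it follows because any $R-\tau+1$ linearly independent columns of $B_D$ can be completed to a selection of $R=n$ columns. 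Since the event in (C1) is the finite intersection of the events $\{\text{rank}(M)\leq R-\tau\}$ over all such $Y_i$, (C1) is equivalent to requiring $\text{rank}(M)\overset{a.s.}{\leq}R-\tau$ for every such $Y_i$, and it suffices to analyze a single $M$. Expanding any $(R-\tau+1)\times(R-\tau+1)$ minor of $M$ — with row set $X$ and block-$i$ column set $Y_i'\subseteq Y_i$ — by the Leibniz formula and grouping terms according to the induced partition $(X_i)\in P_X^K$ with $|X_i|=|Y_i'|$, the coefficient of the monomial $\prod_i\prod_{x\in X_i}\lambda_{i,x}$ equals $\pm\prod_i\det(B_{i,X_i,Y_i'})$, and distinct partitions yield distinct monomials; hence by the Zippel--Schwartz Lemma the minor vanishes almost surely if and only if $\prod_i\det(B_{i,X_i,Y_i'})=0$ for every such partition. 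Since $\det(B_{i,X_i,Y_i'})=0$ precisely when $\dim(\mathcal{S}_{X_i^c}\cap\mathcal{B}_{i,*,Y_i'})\geq 1$, this rewrites $\text{rank}(M)\overset{a.s.}{\leq}R-\tau$ as a rank-loss statement for the union of the matroids $\mathcal{M}_i$ on the ground set $[n]$ with rank functions $r_i(T)=|Y_i|-\dim(\mathcal{S}_{T^c}\cap\mathcal{B}_{i,*,Y_i})$ — that is, the vector matroids on the rows of $B_{i,*,Y_i}$.

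The second step is to show that $\text{rank}(M)\overset{a.s.}{=}r_{\mathcal{M}_1\vee\cdots\vee\mathcal{M}_K}([n])$. One inequality is deterministic: for any $J\subseteq[n]$ the linear map $(w_i)\mapsto\sum_i\Lambda_i w_i$ from $\prod_i(\mathcal{S}_J\cap\mathcal{B}_{i,*,Y_i})$ into $\mathcal{S}_J$ has a kernel of dimension at least $\sum_i\dim(\mathcal{S}_J\cap\mathcal{B}_{i,*,Y_i})-|J|$, which injects into $\ker M$; minimizing over $J$ gives $\text{rank}(M)\leq\min_{J\subseteq[n]}\bigl(|J|+R-\sum_i\dim(\mathcal{S}_J\cap\mathcal{B}_{i,*,Y_i})\bigr)$, and by the matroid union theorem the right-hand side is exactly $r_{\mathcal{M}_1\vee\cdots\vee\mathcal{M}_K}([n])$ once the $r_i$ are substituted. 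For the reverse, almost-sure, inequality I would use the matroid union theorem to produce a set $I\subseteq[n]$ with $|I|=r_{\mathcal{M}_1\vee\cdots\vee\mathcal{M}_K}([n])$ that splits as $I=I_1\sqcup\cdots\sqcup I_K$ with $I_i$ independent in $\mathcal{M}_i$; then some $Y_i'\subseteq Y_i$ with $|Y_i'|=|I_i|$ has $\det(B_{i,I_i,Y_i'})\neq 0$, and the Leibniz expansion of the $|I|\times|I|$ minor of $M$ on rows $I$ and these columns contains the monomial $\prod_i\prod_{x\in I_i}\lambda_{i,x}$ with a nonzero coefficient that no other term can cancel, so this minor is not identically zero and hence is nonzero almost surely.

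Combining the two steps, (C1) becomes the requirement that for every $Y_i$ with $\sum_{i=1}^K|Y_i|=R$ one has $r_{\mathcal{M}_1\vee\cdots\vee\mathcal{M}_K}([n])\leq R-\tau$, i.e.\ $\min_{J\subseteq[n]}\bigl(|J|+R-\sum_i\dim(\mathcal{S}_J\cap\mathcal{B}_{i,*,Y_i})\bigr)\leq R-\tau$, i.e.\ there is a $J\subseteq[n]$ with $\sum_i\dim(\mathcal{S}_J\cap\mathcal{B}_{i,*,Y_i})\geq|J|+\tau$ — which is exactly (C2). The step identifying the matroid-union rank with this minimum, and with the maximum partitionable independent set used in the previous paragraph, is where a judicious choice of bases for the spaces $\mathcal{B}_{i,*,Y_i}$ together with a defect version of Hall's marriage theorem enters: one builds a bipartite graph between the chosen basis vectors and the row indices $[n]$, with an edge whenever a basis vector has a nonzero entry in that row, and uses the deficiency form of Hall's theorem to certify simultaneously the size of the maximum partitionable independent set and the tightness of the corresponding cover.

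I expect the main obstacle to be the ``$\geq$'' direction of the second step, i.e.\ showing $\text{rank}(M)\overset{a.s.}{\geq}r_{\mathcal{M}_1\vee\cdots\vee\mathcal{M}_K}([n])$: it must (i) extract from the matroid union theorem a maximum independent set that splits cleanly across the matroids $\mathcal{M}_i$, (ii) translate each matroid-independence fact back into a non-vanishing ordinary determinant of a submatrix of the corresponding $B_i$, and (iii) check that the associated monomial in the Leibniz expansion of the large minor of $M$ cannot be cancelled by any other term — this last point being exactly where the genericity of the $\Lambda_i$ (via Zippel--Schwartz) and the distinctness of the $\lambda$-monomials are used. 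Keeping the bookkeeping consistent among the partitions in $P_X^K$, the matroid rank functions $r_i$, and the sparse-subspace dimensions $\dim(\mathcal{S}_J\cap\mathcal{B}_{i,*,Y_i})$ is the genuinely delicate part; once these objects are aligned, the remaining implications are formal.
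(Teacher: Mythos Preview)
Your argument is correct, and it is genuinely more direct than the paper's. The paper proceeds through intermediate conditions (C3)--(C5) that quantify over \emph{all} row sets $X$ with $|X|>R-\tau$ and column sets $Y_i$ with $\sum_i|Y_i|=|X|$; for each such choice it builds matroids $M_{i,X,Y_i}$ on the ground set $X$, passes to their \emph{duals}, applies the matroid union theorem, and only at the end (Step~5) uses a deficiency form of Hall's theorem on a carefully constructed bipartite graph to collapse the quantifier over $X$ and land on (C2). You instead reduce immediately to $\sum_i|Y_i|=R$, work with the row matroids $\mathcal{M}_i$ of $B_{i,*,Y_i}$ on the single ground set $[n]$, and prove the clean identity $\text{rank}(M)\overset{a.s.}{=}r_{\mathcal{M}_1\vee\cdots\vee\mathcal{M}_K}([n])$; the matroid union rank formula then gives (C2) in one line. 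Your route avoids both the dual-matroid construction and the separate Hall step---indeed, your final paragraph's invocation of Hall's theorem is superfluous, since the matroid union theorem already supplies both the $\min_J$ formula and the partitionable maximum independent set you need (the disjointness of the $I_i$ is free, by shrinking). The paper's longer chain has the compensating virtue of making the intermediate combinatorial conditions explicit, but your argument is the shorter path to the theorem.
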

\begin{ex}\label{ex:result_clr_1}
As a simple example, consider the case where $K=2$ and $m_1=m_2=\frac{n}{2}$. In this case, $B_1$ and $B_2$ will each be of size $n\times \frac{n}{2}$ and the only possible choice for $Y_1$ and $Y_2$ will be $[\frac{n}{2}]$. Theorem \ref{thm:main} implies that $\text{rank} ([\Lambda_1 B_1 ~~\Lambda_2 B_2])\overset{a.s.}{\leq}n -\tau$ if and only if there exists a set $J\subseteq[n]$ such that,
\begin{align}\label{eq:example}
\dim(\mathcal{S}_{J} \cap \mathcal{B}_{1}) + \dim(\mathcal{S}_{J} \cap \mathcal{B}_{2}) \geq |J|+\tau.
\end{align}
\end{ex}

\begin{ex}
Continuing Example \ref{ex:result_clr_1}, assume $n=2m_1=2m_2=4$ and suppose we fix $B_1=\begin{bmatrix}
1 & 1 & 1 & 0\\
1 & 2 & 3 & 0
\end{bmatrix}^T$.
In this case, Theorem \ref{thm:main} implies that the only way for the matrix $[\Lambda_1 B_1 ~~\Lambda_2 B_2]$ to lose rank by $\tau=1$ almost-surely is that both columns of $B_2$ have zeros in their $4^{th}$ entries. The set $J\subseteq [4]$ satisfying (\ref{eq:example}) would be $\{1,2,3\}$ in this case.
\end{ex}

\begin{remk}
The significance of the above result lies in finding a condition on the structure of the matrices $B_i, i\in[K]$, such that the statistical rank loss condition is met. Checking the structural condition of~\eqref{eq:result} does not involve statistical analysis and relies only on the combinatorial structure of $\mathcal{B}_i$'s.
\end{remk}

\begin{remk}
When each element of the matrices $B_i,\ldots,B_K$ is scaled by a random coefficient,~\eqref{eq:original} corresponds to the size of a bipartite graph matching representing rows and columns of $B_D$ as the two partite sets. This condition will be similar to~\eqref{eq:result}, but instead of $\dim(\mathcal{S}_J \cap \mathcal{B}_i)$, the number of columns in $B_i$ inside the subspace $\mathcal{S}_J$ is considered. This is intuitive as in the considered setting, each row is scaled by the same coefficient, and hence, the angles between column vectors are preserved.
%, and hence, the direction of each column vector in $\mathcal{B}_i$ is not changed by the random transformation.
Our result can be seen as a generalization of the classical rank-loss results for random matrices.
\end{remk}

\begin{remk}
We show in Section~\ref{sec:tim} how the result of Theorem~\ref{thm:main} can be used to characterize the linear degrees of freedom for the topological interference management problem of a class of networks that had been considered as an open problem before.
\end{remk}
\section{Proof of the Main Result}\label{sec:result}

The proof of Theorem~\ref{thm:main} is composed of four steps, in each of which we present a condition equivalent to condition (\ref{eq:original}).

% prove a lemma to demonstrate the equivalence of two statements which are in turn equivalent to the rank loss condition of \eqref{eq:original}.

%, In particular we use the following equivalent statements to the rank loss condition of \eqref{eq:original},

%where $P_X^K$ is the set of all partitions of the set $X$ into $K$ disjoint subsets.

%We end the proof by showing that~\eqref{eq:condthree} is equivalent to the final condition of~\eqref{eq:result}.

%\subsection{Equivalence of~\eqref{eq:original} and~\eqref{eq:condone}: Expansion of Determinant}\label{sec:proofone}
\subsection{Step 1: Expansion of Determinant}\label{sec:proofone}

We begin by the following equivalence lemma.
% The proof of the Lemma is in Appendix~\ref{sec:expansion_proof}

\begin{lem}\label{lem:expansion}
Condition (\ref{eq:original}) is equivalent to the following.
\begin{gather}
\forall X\subseteq[n]: |X|>R-\tau,
\forall Y_i\subseteq[m_i], i\in[K] \text{ s.t. }\sum_{i=1}^K |Y_i|=|X|, \nonumber\\
\forall (I_1,I_2,...,I_K)\in P_X^K \text{ s.t. }|I_i|=|Y_i|:
\prod_{i=1}^K \det(B_{i,I_i,Y_i})=0.\tag{C3}\label{eq:condone}
\end{gather}
\end{lem}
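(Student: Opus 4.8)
The plan is to reduce condition~(\ref{eq:original}) to a purely algebraic statement about the vanishing of certain polynomials in the random scaling coefficients, and then to invoke the Zippel--Schwartz Lemma. Write $\lambda_{i,j}$ for the $j$-th diagonal entry of $\Lambda_i$, so that (after the obvious relabeling of the columns) every entry of $B_D$ is of the form $\lambda_{i,j}\cdot (B_i)_{j,k}$. For any fixed realization of the $\lambda_{i,j}$'s, $\mathrm{rank}(B_D)\le R-\tau$ holds if and only if every minor of $B_D$ of order larger than $R-\tau$ vanishes. Such a minor is specified by a row set $X\subseteq[n]$ together with a choice of columns of $B_D$, and a choice of columns is exactly a tuple $(Y_1,\ldots,Y_K)$ with $Y_i\subseteq[m_i]$ and $\sum_{i=1}^K|Y_i|=|X|$; moreover the quantifier ``$|X|>R-\tau$'' together with the existence of such a $(Y_1,\ldots,Y_K)$ ranges over precisely the existing minors of order larger than $R-\tau$ (orders exceeding $R$ impose no constraint, being either too large for $[n]$ or incompatible with $\sum_i m_i$). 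Since there are only finitely many minors, the event $\{\mathrm{rank}(B_D)\le R-\tau\}$ is a finite intersection of the events that individual minors vanish, and hence occurs almost surely if and only if each such minor is almost surely zero.

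Next I would expand a single minor. Fix $X$ and $(Y_1,\ldots,Y_K)$ with $\sum_i|Y_i|=|X|$; the corresponding square submatrix of $B_D$ is the block-row concatenation $[\,D_1B_{1,X,Y_1}\ \cdots\ D_KB_{K,X,Y_K}\,]$, where $D_i$ is the $|X|\times|X|$ diagonal matrix carrying the entries $\lambda_{i,j}$, $j\in X$. Expanding the determinant by the iterated Laplace (generalized Cauchy--Binet) rule along the $K$ column blocks, and using that the minor of $D_iB_{i,X,Y_i}$ on a row set $I_i$ equals $\big(\prod_{j\in I_i}\lambda_{i,j}\big)\det(B_{i,I_i,Y_i})$, one obtains
\begin{align*}
\det\!\big([\,D_1B_{1,X,Y_1}\ \cdots\ D_KB_{K,X,Y_K}\,]\big)=\sum_{\substack{(I_1,\ldots,I_K)\in P_X^K\\ |I_i|=|Y_i|}}\varepsilon(I_1,\ldots,I_K)\left(\prod_{i=1}^K\prod_{j\in I_i}\lambda_{i,j}\right)\prod_{i=1}^K\det(B_{i,I_i,Y_i}),
\end{align*}
for suitable signs $\varepsilon(\cdot)\in\{\pm1\}$. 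The key observation is that distinct ordered partitions $(I_1,\ldots,I_K)$ of $X$ produce distinct monomials in the $\lambda_{i,j}$'s: the partition can be recovered from the monomial, since $\lambda_{i,j}$ occurs exactly when $j\in I_i$ and every $j\in X$ lies in exactly one block. Hence these monomials are linearly independent, and the displayed polynomial in $\{\lambda_{i,j}\}$ is the zero polynomial if and only if every coefficient $\prod_{i=1}^K\det(B_{i,I_i,Y_i})$ equals zero.

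Finally, because the scaling coefficients are drawn from a continuous joint distribution, the Zippel--Schwartz Lemma gives that a polynomial in these variables is almost surely zero if and only if it is identically zero. Combining the three steps: $\mathrm{rank}(B_D)\le R-\tau$ almost surely $\iff$ every minor of order $>R-\tau$ is almost surely zero $\iff$ for every such $X$ and $(Y_1,\ldots,Y_K)$ the above expansion polynomial is identically zero $\iff$ $\prod_{i=1}^K\det(B_{i,I_i,Y_i})=0$ for all $(I_1,\ldots,I_K)\in P_X^K$ with $|I_i|=|Y_i|$, which is exactly condition~(\ref{eq:condone}).

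I expect the only delicate point to be the bookkeeping of the determinant expansion, and in particular making the monomial-distinctness argument airtight: this rests on $I_1,\ldots,I_K$ being an ordered partition of the \emph{entire} row set $X$, so that each $j\in X$ contributes a single factor $\lambda_{i,j}$ whose block index $i$ is unambiguously determined by the monomial. The reduction to finitely many minors and the appeal to Zippel--Schwartz are routine.
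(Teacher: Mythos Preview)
Your proposal is correct and follows essentially the same approach as the paper's proof: reduce the rank condition to the almost-sure vanishing of all sufficiently large minors, expand each minor as a polynomial in the random diagonal entries, and apply Zippel--Schwartz. Your treatment is in fact more explicit than the paper's on the one point that matters---the paper simply asserts that the determinant decomposes into monomials with coefficients $\prod_i\det(B_{i,I_i,Y_i})$, whereas you supply the generalized Laplace expansion and the key observation that distinct ordered partitions $(I_1,\ldots,I_K)$ of $X$ yield distinct monomials, which is precisely what ensures no cancellation and forces each coefficient to vanish individually.
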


\begin{proof}
By the definition of matrix rank, the condition in~\eqref{eq:original} is equivalent to the fact that any square submatrix of $B_D$ with size greater than $(R-\tau) \times (R-\tau)$ should have a zero determinant almost-surely. This means that for any subset of rows $X\subseteq[n]: |X| > R-\tau$ and any subsets of columns $Y_i\subseteq[m_i]$, $i\in[K] \text{ s.t. }\sum_{i=1}^K |Y_i|=|X|$,
\begin{align*}
&\det\left([\Lambda_{1,X,X} B_{1,X,Y_1} ~~\ldots~~\Lambda_{K,X,X} B_{K,X,Y_K}]\right)\overset{a.s.}{=}0.
\end{align*}

It can be shown that this determinant is composed of monomials in the channel gains whose coefficients are in the form of $\prod_{i=1}^K \det(B_{i,I_i,Y_i})$ for some $(I_1,I_2,...,I_K)\in P_X^K \text{ s.t. }|I_i|=|Y_i|$. By the Zippel-Schwartz Lemma \cite{zip,sch}, for the whole multivariate polynomial to be equal to zero for almost all values of the channel gains, each of these coefficients should be equal to zero, which gives~\eqref{eq:condone}.
\end{proof}

\begin{ex}
Consider the case where $K=2$, $n=3$, $m_1=1$, $m_2=2$, and we have
$B_1=\begin{bmatrix}
a_{11} & a_{21} & a_{31}
\end{bmatrix}^T,
B_2=\begin{bmatrix}
a'_{11} & a'_{21} & a'_{31}\\a'_{12} & a'_{22} & a'_{32}
\end{bmatrix}^T.$
Also, let $\lambda_i$, $\lambda'_i$, $i \in \{1,2,3\}$, be the $i^{\textrm{th}}$ diagonal element of $\Lambda_1$ and $\Lambda_2$, respectively.
We then have,
\begin{align*}
B_D=\begin{bmatrix}
\lambda_1 a_{11} & \lambda'_1 a'_{11} & \lambda'_1 a'_{12} \\ \lambda_2 a_{21} & \lambda'_2 a'_{21} & \lambda'_2 a'_{22} \\ \lambda_3 a_{31} & \lambda'_3 a'_{31} & \lambda'_3 a'_{32}
\end{bmatrix}.
\end{align*}
Now, $\text{rank}(B_D)\leq 2$ is equivalent to $\det(B_D)=0$. Note that,
\begin{align*}
\det(B_D)&=\lambda_1 \lambda'_2 \lambda'_3 \left[a_{11} \det\left(\begin{bmatrix}
a'_{21} & a'_{22}\\a'_{31} & a'_{32}
\end{bmatrix}\right)\right]\\
&-\lambda_2 \lambda'_1 \lambda'_3 \left[a_{21} \det\left(\begin{bmatrix}
a'_{11} & a'_{12}\\a'_{31} & a'_{32}
\end{bmatrix}\right)\right]\\
&+\lambda_3 \lambda'_1 \lambda'_2 \left[a_{31} \det\left(\begin{bmatrix}
a'_{11} & a'_{12}\\a'_{21} & a'_{22}
\end{bmatrix}\right)\right].
\end{align*}
Therefore, the Zippel-Schwartz Lemma implies that this determinant is almost-surely equal to zero if and only if each of the products inside the brackets is equal to zero.
\end{ex}
%\subsection{Equivalence of~\eqref{eq:condone} and~\eqref{eq:condtwo}: Sparse Subspaces}\label{sec:prooftwo}
\subsection{Step 2: Sparse Subspaces}\label{sec:prooftwo}
We now express the condition~\eqref{eq:condone} in terms of sparse subspaces. Recall that a sparse subspace is defined by column vectors that have zero entries for a specific set of rows.

\begin{lem}\label{lem:sparse}
Condition (\ref{eq:condone}) is equivalent to the following.
\begin{gather}
\forall X\subseteq[n]: |X|>R-\tau,\nonumber\\
\forall Y_i\subseteq[m_i], i\in[K] \text{ s.t. }\sum_{i=1}^K |Y_i|=|X|, \nonumber\\
\forall (I_1,I_2,...,I_K)\in P_X^K \text{ s.t. }|I_i|=|Y_i|, \forall i\in[K]:\nonumber\\
\sum_{i=1}^K \dim(\mathcal{S}_{I_i^c} \cap \mathcal{B}_{i,*,Y_i})>0.\tag{C4}\label{eq:condtwo}
\end{gather}
\end{lem}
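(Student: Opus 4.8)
The plan is to reduce the lemma to a single elementary linear-algebra fact applied factor by factor. Note first that conditions~\eqref{eq:condone} and~\eqref{eq:condtwo} are universally quantified over exactly the same family of tuples $(X,(Y_i)_{i\in[K]},(I_i)_{i\in[K]})$: those with $|X|>R-\tau$, $\sum_i|Y_i|=|X|$, $(I_1,\dots,I_K)\in P_X^K$, and $|I_i|=|Y_i|$ for all $i$. Hence it suffices to prove that, for every such fixed tuple,
\begin{equation*}
\prod_{i=1}^K \det(B_{i,I_i,Y_i})=0 \quad\Longleftrightarrow\quad \sum_{i=1}^K \dim(\mathcal{S}_{I_i^c}\cap\mathcal{B}_{i,*,Y_i})>0 .
\end{equation*}

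The core claim, for a fixed index $i$, is: $\det(B_{i,I_i,Y_i})=0$ if and only if $\dim(\mathcal{S}_{I_i^c}\cap\mathcal{B}_{i,*,Y_i})>0$. To see this, I would use that $B_i$ has full column rank, so every $v\in\mathcal{B}_{i,*,Y_i}$ can be written uniquely as $v=B_{i,*,Y_i}\,c$ for a coefficient vector $c\in\mathbb{R}^{|Y_i|}$, and restricting to the rows in $I_i$ gives $v_{I_i}=B_{i,I_i,Y_i}\,c$. Membership $v\in\mathcal{S}_{I_i^c}$ is precisely the statement $v_{I_i}=0$, i.e. $B_{i,I_i,Y_i}\,c=0$; and $v\neq 0$ is equivalent to $c\neq 0$, again by full column rank. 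Thus $\mathcal{S}_{I_i^c}\cap\mathcal{B}_{i,*,Y_i}$ contains a nonzero vector iff the square matrix $B_{i,I_i,Y_i}$ (of order $|Y_i|=|I_i|$) has a nontrivial kernel, iff $\det(B_{i,I_i,Y_i})=0$.

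Granting the core claim, the displayed equivalence follows immediately: the product $\prod_i\det(B_{i,I_i,Y_i})$ vanishes iff some factor vanishes, iff $\dim(\mathcal{S}_{I_i^c}\cap\mathcal{B}_{i,*,Y_i})>0$ for some $i$; and since each of these dimensions is a nonnegative integer, the last statement is equivalent to $\sum_{i=1}^K\dim(\mathcal{S}_{I_i^c}\cap\mathcal{B}_{i,*,Y_i})>0$. Applying this to every admissible tuple converts~\eqref{eq:condone} into~\eqref{eq:condtwo} and back, proving the lemma. I do not expect a genuine obstacle here; the only points needing care are the uniqueness of the coordinate representation $v=B_{i,*,Y_i}c$ (which is exactly where the full-column-rank hypothesis on each $B_i$, assumed throughout, is used) and keeping the row-index bookkeeping straight — a vector supported off $I_i$ corresponds to the vanishing of its $I_i$-entries, hence to a kernel vector of $B_{i,I_i,Y_i}$.
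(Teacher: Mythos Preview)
Your proposal is correct and follows essentially the same approach as the paper: both arguments reduce to the single-index equivalence $\det(B_{i,I_i,Y_i})=0 \Leftrightarrow \dim(\mathcal{S}_{I_i^c}\cap\mathcal{B}_{i,*,Y_i})>0$, proved by identifying nonzero vectors in $\mathcal{S}_{I_i^c}\cap\mathcal{B}_{i,*,Y_i}$ with nontrivial kernel vectors of $B_{i,I_i,Y_i}$ via the full-column-rank assumption. Your write-up is in fact slightly more explicit about the uniqueness of the representation $v=B_{i,*,Y_i}c$ and the $v\neq 0\Leftrightarrow c\neq 0$ step than the paper's version.
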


\begin{proof}
First assume that the condition in~\eqref{eq:condone} holds. Then for some $i\in[K]$ there exists a linear combination of the columns in $B_{i,I_i,Y_i}$ which is equal to the zero vector. Therefore, if we apply this linear combination to the entire matrix $B_{i,*,Y_i}$, we end up with a vector in the sparse subspace $\mathcal{S}_{{I_i}^c}$ since $B_{i,*,Y_i}$ is full-column rank. This implies~\eqref{eq:condtwo}.

If the condition in~\eqref{eq:condtwo} holds, then for some $i\in[K]$, there exists a linear combination of the columns in $B_{i,*,Y_i}$ which is zero in the coordinates in $I_i$. This means that for any $I_i: |I_i|=|Y_i|$, $\det (B_{i,I_i,Y_i})=0$, implying that~\eqref{eq:condone} holds.
\end{proof}

\subsection{Step 3: Connection to Rank of Union of Matroids}\label{sec:proofthree}
In this step, we show how to represent condition (\ref{eq:condtwo}) in terms of the rank loss of the union of certain matroids. To this end, for any choice of $X\subseteq[n]$ and $Y_i\subset[m_i],~\forall i \in [K]$ which satisfy $\sum_{i=1}^K |Y_i|=|X|>R-\tau$ and
\begin{align}\label{eq:matroiddefcondition}
\dim(\mathcal{S}_{X^c} \cap \mathcal{B}_{i,*,Y_i} )=0,\forall i\in[K],
\end{align}

define $\mathcal{I}_{i,X,Y_i}$ as
\begin{align}
\mathcal{I}_{i,X,Y_i}=\{I\subseteq X:\dim(\mathcal{S}_{I\cup X^c}\cap \mathcal{B}_{i,*,Y_i})=0\}, i\in[K].
\end{align}
We have the following claim that is proved in Appendix~\ref{sec:matroid_proof}. 
\begin{claim}\label{matroid}
$M_{i,X,Y_i}=(X,\mathcal{I}_{i,X,Y_i})$ is a matroid with rank function $r_{i,X,Y_i}(J)=|J|-\dim (\mathcal{S}_{J\cup X^c}\cap \mathcal{B}_{i,*,Y_i}) $, $i\in[K]$.
\end{claim}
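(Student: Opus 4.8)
The plan is to realize $M_{i,X,Y_i}$ as the \emph{dual} of a linear (column) matroid, so that the matroid axioms come for free and the rank formula drops out of the standard expression for the rank function of a dual matroid. Fix $i\in[K]$ and abbreviate $\mathcal{B}:=\mathcal{B}_{i,*,Y_i}$ and $d:=|Y_i|=\dim\mathcal{B}$ (the second equality because $B_{i,*,Y_i}$ is full column rank). Let $\pi_X\colon\mathbb{R}^n\to\mathbb{R}^X$ be the coordinate projection onto the entries indexed by $X$. The standing hypothesis $\dim(\mathcal{S}_{X^c}\cap\mathcal{B})=0$ says precisely that no nonzero vector of $\mathcal{B}$ is supported on $X^c$, i.e.\ $\ker\pi_X\cap\mathcal{B}=\{0\}$; hence $\pi_X|_{\mathcal{B}}$ is injective and $V:=\pi_X(\mathcal{B})\subseteq\mathbb{R}^X$ has dimension $d$.

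First I would translate the defining condition of $\mathcal{I}_{i,X,Y_i}$ into linear algebra over $V$. For $J\subseteq X$, a vector $b\in\mathcal{B}$ lies in $\mathcal{S}_{J\cup X^c}$ iff $b$ vanishes on $(J\cup X^c)^c=X\setminus J$, which — since $\pi_X$ is a linear bijection from $\mathcal{B}$ onto $V$ — is equivalent to $\pi_X(b)$ vanishing on $X\setminus J$. Thus $\dim(\mathcal{S}_{J\cup X^c}\cap\mathcal{B})=\dim\{v\in V:\ v|_{X\setminus J}=0\}$. Picking a $d\times|X|$ matrix $A$ with columns indexed by $X$ whose rows form a basis of $V$, and writing $v=x^{\!\top}A$ with $x\in\mathbb{R}^d$, the condition $v|_{X\setminus J}=0$ becomes $x^{\!\top}A_{*,X\setminus J}=0$, so that
\begin{equation}\label{eq:dimToRank}
\dim(\mathcal{S}_{J\cup X^c}\cap\mathcal{B})=d-\operatorname{rank}(A_{*,X\setminus J}).
\end{equation}
In particular $J\in\mathcal{I}_{i,X,Y_i}$ iff $\operatorname{rank}(A_{*,X\setminus J})=d$, i.e.\ iff $X\setminus J$ is a spanning set of the column matroid $N:=N(A)$ on ground set $X$.

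Next I would invoke the definition of the dual matroid: for any matroid $N$ on $X$, the sets $J\subseteq X$ for which $X\setminus J$ is spanning in $N$ are exactly the independent sets of $N^{*}$. Hence $\mathcal{I}_{i,X,Y_i}$ is the independent-set family of $N^{*}$, so $M_{i,X,Y_i}=N^{*}$ is a matroid. For the rank function, the standard formula gives $r_{N^{*}}(J)=|J|-r_N(X)+r_N(X\setminus J)$ for $J\subseteq X$; since $r_N(X)=\operatorname{rank}(A)=d$ and $r_N(X\setminus J)=\operatorname{rank}(A_{*,X\setminus J})$, combining with \eqref{eq:dimToRank} yields $r_{N^{*}}(J)=|J|-d+\operatorname{rank}(A_{*,X\setminus J})=|J|-\dim(\mathcal{S}_{J\cup X^c}\cap\mathcal{B})$, which is the claimed $r_{i,X,Y_i}(J)$.

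The one genuinely delicate point — and where I would be most careful — is the role of the standing hypothesis $\dim(\mathcal{S}_{X^c}\cap\mathcal{B}_{i,*,Y_i})=0$: it is exactly what makes $\pi_X|_{\mathcal{B}}$ injective, guarantees $\dim V=d$, and at the matroid level forces $\emptyset\in\mathcal{I}_{i,X,Y_i}$; without it the family $\mathcal{I}_{i,X,Y_i}$ need not even contain $\emptyset$. As an alternative that avoids duality, the identity \eqref{eq:dimToRank} also lets one verify directly that $r_{i,X,Y_i}$ is a matroid rank function: $r_{i,X,Y_i}(\emptyset)=0$ by the hypothesis; $0\le r_{i,X,Y_i}(J)\le|J|$ since $0\le\dim(\mathcal{S}_{J\cup X^c}\cap\mathcal{B})\le|J|$; monotonicity follows from $\mathcal{S}_{J\cup X^c}=\mathcal{S}_{J'\cup X^c}+\operatorname{span}\{e_j:j\in J\setminus J'\}$ for $J'\subseteq J$; and submodularity of $r_{i,X,Y_i}$ follows from the \emph{supermodularity} of $J\mapsto\dim(\mathcal{S}_{J\cup X^c}\cap\mathcal{B})$, which in turn comes from $\mathcal{S}_{A\cup B}=\mathcal{S}_A+\mathcal{S}_B$, $\mathcal{S}_A\cap\mathcal{S}_B=\mathcal{S}_{A\cap B}$, and the elementary inequality $\dim((U+W)\cap\mathcal{B})+\dim((U\cap W)\cap\mathcal{B})\ge\dim(U\cap\mathcal{B})+\dim(W\cap\mathcal{B})$ valid for all subspaces $U,W$. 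I expect the dual-matroid route to be the cleanest; the main ``obstacle'' is simply spotting the dual-matroid structure, after which everything is routine.
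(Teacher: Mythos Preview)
Your proof is correct and takes a genuinely different route from the paper. The paper verifies the matroid axioms for $M_{i,X,Y_i}$ from scratch: the hereditary property is immediate, but the exchange property is established by a somewhat lengthy contradiction argument involving explicit linear combinations, and the rank formula is then proved separately by an upper-bound dimension count together with a Steinitz-exchange construction for the matching lower bound. You instead project $\mathcal{B}_{i,*,Y_i}$ onto $\mathbb{R}^X$ (using the standing hypothesis $\dim(\mathcal{S}_{X^c}\cap\mathcal{B}_{i,*,Y_i})=0$ to ensure injectivity), read off the identity $\dim(\mathcal{S}_{J\cup X^c}\cap\mathcal{B})=d-\operatorname{rank}(A_{*,X\setminus J})$, and recognize $\mathcal{I}_{i,X,Y_i}$ as exactly the family of sets whose complement in $X$ spans the column matroid $N=N(A)$; this is the independent-set family of $N^{*}$, so both the matroid property and the rank formula follow from the standard dual-rank identity.

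What your approach buys, beyond brevity, is a structural identification: since $M_{i,X,Y_i}=N^{*}$, the dual $M_{i,X,Y_i}^{*}$ used immediately afterwards in the paper is nothing but the linear column matroid $N$ itself, with rank function $r_{i,X,Y_i}^{*}(J)=\operatorname{rank}(A_{*,J})$. This makes the subsequent step (applying the matroid union theorem to $\bigvee_i M_{i,X,Y_i}^{*}$) conceptually cleaner --- it is a union of representable matroids on the common ground set $X$. The paper's direct verification, on the other hand, is self-contained and does not presuppose familiarity with dual matroids or the co-independence/spanning characterization; your alternative sketch via the supermodularity of $J\mapsto\dim(\mathcal{S}_{J\cup X^c}\cap\mathcal{B})$ is also correct and is closer in spirit to that style of argument.
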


For any matroid $M_{i,X,Y_i}$, the dual matroid $M_{i,X,Y_i}^*$ is defined as $M_{i,X,Y_i}^*=(X,\mathcal{I}_{i,X,Y_i}^*)$ whose basis sets are the complements of the basis sets of $M_{i,X,Y_i}$ \cite{schrijver}.\footnote{For a matroid $M=(X,\mathcal{I})$, $J\subseteq X$ is called a \emph{basis set} if $J\in \mathcal{I}$ and there is no $J'\in\mathcal{I}$ such that $J\subset J' \subseteq X$ \cite[Ch. 39]{schrijver}.} Using Theorem 39.2 in \cite{schrijver} and Claim \ref{matroid}, the rank of this dual matroid is 
\begin{align}
r_{i,X,Y_i}^*(J)&=|J|-r_{i,X,Y_i}(X)+r_{i,X,Y_i}(X\setminus J)\nonumber\\
&=|J|-(|X|-|Y_i|)+(|X\setminus J|-\dim (\mathcal{S}_{J^c}\cap \mathcal{B}_{i,*,Y_i}))\nonumber\\
&=|Y_i|-\dim (\mathcal{S}_{J^c}\cap \mathcal{B}_{i,*,Y_i}).\label{dualrank}
\end{align}

Now, consider the union of the dual matroids $M_{i,X,Y_i}^*$, $i\in[K]$, denoted by $\bigvee_{i=1}^K M_{i,X,Y_i}^*=\left(X,\bigvee_{i=1}^K \mathcal{I}_{i,X,Y_i}^*\right)$, where

\begin{align*}
\bigvee_{i=1}^K \mathcal{I}_{i,X,Y_i}^*=\left\{\bigcup_{i=1}^K I_i: I_i\in\mathcal{I}_{i,X,Y_i}^*\right\}.
\end{align*}
Let $r_{X,Y_1,...,Y_K}^*(.)$ denote the rank of $\bigvee_{i=1}^K M_{i,X,Y_i}^*$. Then we have the following lemma.
\begin{lem}\label{lem:matroidunion}
For any $X\subseteq [n], Y_i\subseteq[m_i], i\in[K]$ s.t. $\sum_{i=1}^K |Y_i|=|X|> R -\tau$ and (\ref{eq:matroiddefcondition}) is satisfied, the following are equivalent
\begin{gather}
\forall (I_1,I_2,...,I_K)\in P_X^K \text{ s.t. }|I_i|=|Y_i|:
\sum_{i=1}^K \dim(\mathcal{S}_{I_i^c} \cap \mathcal{B}_{i,*,Y_i})>0.\label{eq:c4prime}
\end{gather}
\begin{gather}
r_{X,Y_1,...,Y_K}^*(X)<|X|.\label{eq:condunionmatroid}
\end{gather}
\end{lem}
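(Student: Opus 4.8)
The plan is to obtain condition~\eqref{eq:condunionmatroid} by reading off the independence structure of the union matroid. Each $M^*_{i,X,Y_i}$ is a matroid (the dual of the matroid $M_{i,X,Y_i}$ of Claim~\ref{matroid}) on the common ground set $X$, so by the matroid union theorem $\bigvee_{i=1}^K M^*_{i,X,Y_i}$ is a matroid on $X$; hence $r_{X,Y_1,\ldots,Y_K}^*(X)\le|X|$ always, with equality iff $X$ is independent in the union, i.e. iff $X=\bigcup_{i=1}^K I_i$ for some $I_i\in\mathcal{I}^*_{i,X,Y_i}$. Since matroid independence is hereditary, any such covering refines to a disjoint one (replace $I_i$ by $I_i\setminus\bigcup_{j<i}I_j$), so~\eqref{eq:condunionmatroid} is equivalent to: there is \emph{no} partition $X=\bigsqcup_{i=1}^K I_i$ with $I_i\in\mathcal{I}^*_{i,X,Y_i}$.

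Next I would fix the block sizes. By~\eqref{dualrank} and the standing hypothesis~\eqref{eq:matroiddefcondition}, the rank of $M^*_{i,X,Y_i}$ is $|Y_i|$, so any $I_i\in\mathcal{I}^*_{i,X,Y_i}$ has $|I_i|\le|Y_i|$; if the $I_i$ partition $X$ then $\sum_i|I_i|=|X|=\sum_i|Y_i|$ forces $|I_i|=|Y_i|$ for all $i$, making each $I_i$ a basis of $M^*_{i,X,Y_i}$. By definition of the dual, $I_i$ is a basis of $M^*_{i,X,Y_i}$ iff $X\setminus I_i$ is a basis of $M_{i,X,Y_i}$; as $|X\setminus I_i|=|X|-|Y_i|=r_{i,X,Y_i}(X)$, this is equivalent to $X\setminus I_i$ being independent in $M_{i,X,Y_i}$, i.e. $r_{i,X,Y_i}(X\setminus I_i)=|X\setminus I_i|$, which by the rank formula of Claim~\ref{matroid} reads $\dim(\mathcal{S}_{(X\setminus I_i)\cup X^c}\cap\mathcal{B}_{i,*,Y_i})=0$. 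Since $I_i\subseteq X$, we have $(X\setminus I_i)\cup X^c=I_i^c$, so the condition is exactly $\dim(\mathcal{S}_{I_i^c}\cap\mathcal{B}_{i,*,Y_i})=0$.

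Combining the pieces, $r_{X,Y_1,\ldots,Y_K}^*(X)=|X|$ iff there exists $(I_1,\ldots,I_K)\in P_X^K$ with $|I_i|=|Y_i|$ and $\dim(\mathcal{S}_{I_i^c}\cap\mathcal{B}_{i,*,Y_i})=0$ for all $i$; and since each of these dimensions is nonnegative, that system of equalities is equivalent to $\sum_{i=1}^K\dim(\mathcal{S}_{I_i^c}\cap\mathcal{B}_{i,*,Y_i})=0$. Negating both sides, and using $r_{X,Y_1,\ldots,Y_K}^*(X)\le|X|$, yields precisely the equivalence of~\eqref{eq:c4prime} and~\eqref{eq:condunionmatroid}.

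I expect the delicate part to be the bookkeeping in the first two paragraphs: passing from ``$X$ independent in the union'' to a partition with the \emph{prescribed} block sizes $|I_i|=|Y_i|$ (rather than a covering by possibly smaller sets), and correctly translating membership in $\mathcal{I}^*_{i,X,Y_i}$ through the primal and dual rank functions. A more computational alternative would be to invoke the Nash--Williams matroid-union rank formula $r_{X,Y_1,\ldots,Y_K}^*(X)=\min_{W\subseteq X}\bigl(|X\setminus W|+\sum_{i=1}^K r^*_{i,X,Y_i}(W)\bigr)$, substitute~\eqref{dualrank}, and simplify to recast~\eqref{eq:condunionmatroid} as the existence of $W\subseteq X$ with $\sum_{i=1}^K\dim(\mathcal{S}_{W^c}\cap\mathcal{B}_{i,*,Y_i})>|X\setminus W|$; but reconciling that inequality with~\eqref{eq:c4prime} still requires essentially the same partition argument, so I would present the direct version above.
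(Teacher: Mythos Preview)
Your proposal is correct and follows essentially the same approach as the paper: both arguments hinge on the duality $I_i$ a basis of $M^*_{i,X,Y_i}$ iff $X\setminus I_i$ a basis of $M_{i,X,Y_i}$, the counting $\sum|I_i|=|X|=\sum|Y_i|$ together with $|I_i|\le r^*_{i,X,Y_i}(X)=|Y_i|$ (from~\eqref{dualrank} and~\eqref{eq:matroiddefcondition}) to pin the block sizes, and the rank formula of Claim~\ref{matroid} to translate independence of $X\setminus I_i$ into $\dim(\mathcal{S}_{I_i^c}\cap\mathcal{B}_{i,*,Y_i})=0$. Your presentation is arguably a bit tidier in making explicit the passage from a covering of $X$ by independent sets to a partition via the hereditary property, which the paper leaves as ``easy to verify.''
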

\begin{proof}
Condition~\eqref{eq:c4prime} is equivalent to the following for any valid choice of $X$ and $Y_i$'s:
\begin{gather}
\nexists (I_1,I_2,...,I_K)\in P_X^K:|I_i|=|Y_i|\text{ and } X\setminus I_i \in \mathcal{I}_{i,X,Y_i}, \forall i\in[K].\label{equiv}
\end{gather}

Now, let us focus on a specific $i\in[K]$ and the corresponding set $I_i$ whose size satisfies $|I_i|=|Y_i|$,
\begin{align}
|X\setminus I_i|&=|X|-|Y_i|\nonumber\\
&=|X|-\dim(\mathcal{S}_{X\cup X^c}\cap \mathcal{B}_{i,*,Y_i})\nonumber\\
&=r_{i,X,Y_i}(X),\label{eq:rank}
\end{align}
where \eqref{eq:rank} follows from Claim \ref{matroid}. By the definition of the rank of a matroid (see, e.g. \cite[Ch. 39]{schrijver}), \eqref{eq:rank} implies that all the members of $\mathcal{I}_{i,X,Y_i}$ (i.e., the independent sets of $M_{i,X,Y_i}$) are of size at most $|X\setminus I_i|$. This means that if $X\setminus I_i\in \mathcal{I}_{i,X,Y_i}$, then it is a basis for $M_{i,X,Y_i}$. This, in turn, is equivalent to $I_i$ being a basis for the dual matroid $M_{i,X,Y_i}^*$.

On the other hand, from (\ref{dualrank}) we have that the rank of the dual matroid $M_{i,X,Y_i}^*$ is upper bounded by $|Y_i|$, implying that all the members of $\mathcal{I}_{i,X,Y_i}^*$ have size at most $|Y_i|$. Thus, under the constraint $|I_i|=|Y_i|$, $I_i$ being a basis for the dual matroid $M_{i,X,Y_i}^*$ is equivalent to $I_i\in \mathcal{I}_{i,X,Y_i}^*$. Consequently, \eqref{equiv} (hence ~\eqref{eq:c4prime}) is equivalent to the following for any valid choice of $X$ and $Y_i$'s:
\begin{gather}
\nexists (I_1,I_2,...,I_K)\in P_X^K:|I_i|=|Y_i|\text{ and }  I_i \in \mathcal{I}_{i,X,Y_i}^*, \forall i\in[K].\label{equiv2}
\end{gather}

Finally, it is easy to verify that (\ref{equiv2}) is equivalent to $r_{X,Y_1,...,Y_K}^*(X)<|X|$, since there cannot exist any $(I_1,I_2,...,I_K)\in P_X^K$ such that $I_i\in \mathcal{I}_{i,X,Y_i}^*$ at the same time for all $i\in[K]$. This completes the proof.
\end{proof}
Note that using this lemma, we can now replace the last two lines of condition~\eqref{eq:condtwo} by simply $r_{X,Y_1,...,Y_K}^*(X)<|X|$.

\subsection{Step 4: Matroid Union Theorem}\label{sec:proofunionmatroid}
In this step of the proof, we make use of the Matroid Union Theorem \cite{schrijver} to characterize an equivalent condition to (\ref{eq:condunionmatroid}). We start by stating the Matroid Union Theorem.
\begin{thm}(Matroid Union Theorem \cite[Chapter 42]{schrijver})\label{mut}
Let $M_{i}=(E_{i},\mathcal{I}_{i})$, $i\in[K]$, be $K$ matroids with rank functions $r_i(.)$. Then $\bigvee_{i=1}^K M_{i}$ is also a matroid with rank function
\begin{align}
r(U)=\min_{T\subseteq U} \left(|U\setminus T|+\sum_{i=1}^K r_{i}(T\cap E_{i})\right).
\end{align}
\end{thm}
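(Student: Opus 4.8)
The plan is to prove the matroid property and the rank formula together, since the standard argument produces both at once. Write $E=\bigcup_{i=1}^K E_i$, and put $b(T)=\sum_{i=1}^K r_i(T\cap E_i)$ for $T\subseteq E$. First recall that a subset $I\subseteq E$ lies in $\bigvee_i\mathcal I_i$ iff it can be written as $I=I_1\cup\cdots\cup I_K$ with the $I_i$ pairwise disjoint and $I_i\in\mathcal I_i$ (if an element occurs in several $I_i$, delete it from all but one). The easy inequality ``$\le$'' is then one line: for such an $I\subseteq U$ and any $T\subseteq U$,
\[
|I|=|I\setminus T|+\sum_{i=1}^K|I_i\cap T|\;\le\;|U\setminus T|+\sum_{i=1}^K r_i(T\cap E_i),
\]
because $I_i\cap T\subseteq E_i\cap T$ is independent in $M_i$; hence every independent set of $\bigvee_i M_i$ inside $U$ has size at most $\min_{T\subseteq U}\bigl(|U\setminus T|+b(T)\bigr)$.

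For the reverse inequality and the matroid axioms I would run the classical augmenting-path argument for matroid union. Fix $U$ and let $I=I_1\cup\cdots\cup I_K\subseteq U$ be an inclusion-maximal member of $\bigvee_i\mathcal I_i$ (with the $I_i$ disjoint and independent in $M_i$). Build the auxiliary digraph on $U$ whose arcs encode the single-element exchanges available inside each $M_i$ --- an arc $x\to y$ in layer $i$ whenever $x\in I_i$, $y\in U\setminus I_i$ and $I_i-x+y\in\mathcal I_i$ --- take the vertices of $U\setminus I$ as sources, and observe that a directed path from a source to a vertex that some layer can absorb would let one reroute elements along the path and strictly enlarge $I$, contradicting maximality. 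Let $Z\subseteq U$ be the set of vertices reachable from the sources and set $T:=Z$. Unreachability forces $U\setminus T\subseteq I$ and, in each layer $i$, that $I_i\cap T$ spans $T\cap E_i$ in $M_i$, so $r_i(T\cap E_i)=|I_i\cap T|$; summing gives $|I|=|U\setminus T|+\sum_i r_i(T\cap E_i)$, matching the easy bound. Finally, running the same reachability analysis on two members $I,I'$ of $\bigvee_i\mathcal I_i$ with $|I|<|I'|$ produces an element of $I'\setminus I$ that can be adjoined to $I$, which is the exchange axiom; so $\bigvee_i M_i$ is a matroid and the displayed function is its rank function.

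I expect the crux to be the correctness of the augmenting step --- the matroid analogue of Berge's theorem, i.e.\ \emph{Krogdahl's lemma}: a \emph{shortest} source-to-sink alternating sequence, when its swaps $I_i-x+y$ are performed simultaneously along its layers, really does yield sets still independent in each $M_i$ whose disjoint union is one larger. Proving this cleanly needs the ``no shortcut'' argument inside each $M_i$ using only that matroid's exchange axiom, plus a check that the layers remain disjoint; the secondary obstacle is the bookkeeping that extracts the tight witness $T=Z$ from the unreachable set and verifies the per-layer equalities $|I_i\cap T|=r_i(T\cap E_i)$. Two simplifications are available but neither bypasses this core work: one may reduce general $K$ to $K=2$ by induction, writing $\bigvee_{i=1}^K M_i=\bigl(\bigvee_{i=1}^{K-1}M_i\bigr)\vee M_K$ and collapsing the resulting double minimum; and one may instead invoke the classical fact that an integer-valued monotone submodular $b$ with $b(\emptyset)=0$ induces a matroid of rank $\min_T(|U\setminus T|+b(T))$ and then only prove $\bigvee_i\mathcal I_i=\{\,I:|J|\le b(J)\ \forall J\subseteq I\,\}$ --- but the nontrivial inclusion ``$\supseteq$'' there is again exactly the augmenting-path statement. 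This is the route taken in \cite[Chapter 42]{schrijver}.
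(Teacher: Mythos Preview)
The paper does not prove this statement at all: Theorem~\ref{mut} is quoted from \cite[Chapter~42]{schrijver} and used as a black box in Step~4 of the main argument, so there is no ``paper's own proof'' to compare your proposal against. Your sketch is essentially the standard Edmonds--Nash-Williams argument that appears in the cited reference, so in that sense you are reproducing the source the paper defers to rather than offering an alternative route.

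One technical slip worth flagging: with arcs oriented as you wrote them ($x\to y$ for $x\in I_i$, $y\notin I_i$), the vertices of $U\setminus I$ have \emph{no outgoing arcs}, so declaring them ``sources'' makes the reachable set $Z$ equal to $U\setminus I$ trivially, and then your per-layer identity $r_i(T\cap E_i)=|I_i\cap T|$ would force $r_i\bigl((U\setminus I)\cap E_i\bigr)=0$, which is false in general. The standard construction reverses the orientation (arcs $y\to x$ with $y\notin I_i$, $x\in I_i$, $I_i-x+y\in\mathcal I_i$) and takes as \emph{sinks} those elements some layer can absorb; equivalently, keep your orientation but swap the roles of sources and sinks and let $T$ be the complement of the reachable set. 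With that fix the rest of your outline---the shortest-path/no-shortcut lemma, the extraction of the tight witness $T$, and the derivation of the exchange axiom---is the correct skeleton of the proof in \cite{schrijver}.
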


Equipped with the above theorem, we can now state the following equivalence lemma.

\begin{lem}
Condition (\ref{eq:condtwo}) is equivalent to the following.
\begin{gather}
\forall X\subseteq[n]: |X|>R-\tau,\nonumber\\
\forall Y_i\subseteq[m_i], i\in[K] \text{ s.t. }\sum_{i=1}^K |Y_i|=|X|, \nonumber\\
\exists J\subseteq X:
\sum_{i=1}^K \dim (\mathcal{S}_{J\cup X^c}\cap \mathcal{B}_{i,*,Y_i})>|J|.\tag{C5}\label{eq:condthree}
\end{gather}
\end{lem}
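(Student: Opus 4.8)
The plan is to fix, once and for all, an admissible choice of $X\subseteq[n]$ with $|X|>R-\tau$ and of $Y_i\subseteq[m_i]$ with $\sum_{i=1}^K|Y_i|=|X|$, and to prove that for this fixed choice the inner assertion of~\eqref{eq:condtwo} (``for every partition $(I_1,\dots,I_K)\in P_X^K$ with $|I_i|=|Y_i|$ one has $\sum_i\dim(\mathcal{S}_{I_i^c}\cap\mathcal{B}_{i,*,Y_i})>0$'') is equivalent to the inner assertion of~\eqref{eq:condthree} (``there exists $J\subseteq X$ with $\sum_i\dim(\mathcal{S}_{J\cup X^c}\cap\mathcal{B}_{i,*,Y_i})>|J|$''). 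Since the outer quantifiers over $X$ and over the $Y_i$'s are identical in the two conditions, this local equivalence immediately yields the global equivalence of~\eqref{eq:condtwo} and~\eqref{eq:condthree}.

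First I would dispose of the degenerate case in which~\eqref{eq:matroiddefcondition} fails, i.e.\ $\dim(\mathcal{S}_{X^c}\cap\mathcal{B}_{i_0,*,Y_{i_0}})>0$ for some $i_0\in[K]$. In this case both inner assertions hold trivially and there is nothing to compare: for~\eqref{eq:condthree} one takes $J=\emptyset$, so that $J\cup X^c=X^c$ and the sum is at least $\dim(\mathcal{S}_{X^c}\cap\mathcal{B}_{i_0,*,Y_{i_0}})>0=|J|$; for~\eqref{eq:condtwo}, every partition has $I_{i_0}\subseteq X$, hence $I_{i_0}^c\supseteq X^c$ and $\mathcal{S}_{I_{i_0}^c}\supseteq\mathcal{S}_{X^c}$, so the $i_0$-th summand is already positive. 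Thus it only remains to treat the case where~\eqref{eq:matroiddefcondition} holds for all $i\in[K]$, which is exactly the regime in which the matroids $M_{i,X,Y_i}$ and their duals $M_{i,X,Y_i}^*$ are defined.

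In that case I would proceed by a direct chain of equivalences. By Lemma~\ref{lem:matroidunion}, the inner assertion of~\eqref{eq:condtwo} is equivalent to $r_{X,Y_1,\dots,Y_K}^*(X)<|X|$. I then apply the Matroid Union Theorem (Theorem~\ref{mut}) to the $K$ dual matroids $M_{i,X,Y_i}^*$, all of which have the common ground set $X$, to get
\begin{align*}
r_{X,Y_1,\dots,Y_K}^*(X)=\min_{T\subseteq X}\left(|X\setminus T|+\sum_{i=1}^K r_{i,X,Y_i}^*(T)\right).
\end{align*}
Substituting $r_{i,X,Y_i}^*(T)=|Y_i|-\dim(\mathcal{S}_{T^c}\cap\mathcal{B}_{i,*,Y_i})$ from~\eqref{dualrank}, using $\sum_{i=1}^K|Y_i|=|X|$, and reparametrizing by $J:=X\setminus T$ (so that $|X\setminus T|=|J|$ and $T^c=J\cup X^c$), this becomes
\begin{align*}
r_{X,Y_1,\dots,Y_K}^*(X)=|X|-\max_{J\subseteq X}\left(\sum_{i=1}^K\dim(\mathcal{S}_{J\cup X^c}\cap\mathcal{B}_{i,*,Y_i})-|J|\right).
\end{align*}
Hence $r_{X,Y_1,\dots,Y_K}^*(X)<|X|$ holds precisely when the maximum on the right is strictly positive, i.e.\ precisely when some $J\subseteq X$ satisfies $\sum_{i=1}^K\dim(\mathcal{S}_{J\cup X^c}\cap\mathcal{B}_{i,*,Y_i})>|J|$, which is the inner assertion of~\eqref{eq:condthree}. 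Combining this with Lemma~\ref{lem:matroidunion} closes the equivalence.

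The step I expect to require the most care is not any single calculation but the bookkeeping of set complements when passing between the ambient index set $[n]$ and the ground set $X$: the dual rank formula~\eqref{dualrank} is written with $\mathcal{S}_{T^c}$ where $T^c=[n]\setminus T$, whereas the matroids and the sets $I_i,T,J$ all live inside $X$, so one must carefully verify the identities $T^c=(X\setminus T)\cup X^c$ (and, in the degenerate case, $I_i^c=(X\setminus I_i)\cup X^c$) that make the ``$\exists\,J\subseteq X$'' quantifier in~\eqref{eq:condthree} line up with the ``$\min_{T\subseteq X}$'' coming out of the Matroid Union Theorem. Once that is handled, the argument is just a composition of Lemma~\ref{lem:matroidunion}, Theorem~\ref{mut}, and~\eqref{dualrank}, together with the elementary observation that $\dim(\mathcal{S}_{J\cup X^c}\cap\mathcal{B}_{i,*,Y_i})$ is monotone in $J$, which underlies the trivial-case analysis.
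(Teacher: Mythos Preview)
Your proposal is correct and follows essentially the same approach as the paper: handle the degenerate case where~\eqref{eq:matroiddefcondition} fails, then combine Lemma~\ref{lem:matroidunion}, the Matroid Union Theorem, and the dual rank formula~\eqref{dualrank}, finishing with the substitution $J=X\setminus T$. Your explicit bookkeeping of the identity $T^c=(X\setminus T)\cup X^c$ is exactly what underlies step~(c) in the paper's chain of equivalences.
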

\begin{proof}
If for a specific choice of $X$ and $Y_i$'s, (\ref{eq:matroiddefcondition}) is not satisfied (i.e., $\dim(\mathcal{S}_{X^c} \cap \mathcal{B}_{i,*,Y_i} )>0$ for some $i\in[K]$), then it is clear that both statements in~\eqref{eq:condtwo} and~\eqref{eq:condthree} hold. Hence, w.l.o.g. we assume that~\eqref{eq:matroiddefcondition} is satisfied. For any $X\subseteq [n], Y_i\subseteq[m_i], i\in[K] \text{ s.t. }\sum_{i=1}^K |Y_i|=|X|> R -\tau$ and (\ref{eq:matroiddefcondition}) is satisfied, we will get from Lemma~\ref{lem:matroidunion} that the condition in~\eqref{eq:condtwo} is equivalent to,
\begin{gather*}
r_{X,Y_1,...,Y_K}^*(X)<|X|\nonumber\\
\overset{(a)}{\Leftrightarrow}\underset{J\subseteq X}{\min} \left((|X|-|J|)+\sum_{i=1}^K (|Y_i|-\dim (\mathcal{S}_{J^c}\cap \mathcal{B}_{i,*,Y_i}))\right)<|X|\nonumber\\
\Leftrightarrow |X|+\sum_{i=1}^K |Y_i|-\underset{J\subseteq X}{\max} \left(|J|+\sum_{i=1}^K \dim (\mathcal{S}_{J^c}\cap \mathcal{B}_{i,*,Y_i})\right)<|X|\nonumber\\
\overset{(b)}{\Leftrightarrow} ~\max_{J\subseteq X} \bigg(|J|+\sum_{i=1}^K \dim (\mathcal{S}_{J^c}\cap \mathcal{B}_{i,*,Y_i})\bigg)>|X|\\
\Leftrightarrow ~\exists J\subseteq X : \sum_{i=1}^K \dim (\mathcal{S}_{J^c}\cap \mathcal{B}_{i,*,Y_i})>|X\setminus J|\\
\overset{(c)}{\Leftrightarrow} ~\exists J\subseteq X : \sum_{i=1}^K \dim (\mathcal{S}_{J\cup X^c}\cap \mathcal{B}_{i,*,Y_i})>|J|,
\end{gather*}
where (a) follows from (\ref{dualrank}) and Theorem \ref{mut}, (b) follows from the assumption that $\sum_{i=1}^K |Y_i|=|X|$ and (c) follows by changing $J$ to $X\setminus J$. This completes the proof.
\end{proof}
\subsection{Step 5: Hall's Marriage Theorem}\label{sec:prooffour}
In the final step of the proof, we prove the equivalence between~\eqref{eq:condthree} and~\eqref{eq:result} by constructing an appropriate bipartite graph. One partite set represents the column vectors of a carefully chosen set of bases for the subspaces $\mathcal{B}_{i,*,Y_i}$ and the other partite set has $n$ elements, each corresponding to a row. A column vector vertex is connected to a row vertex if and only if the vector has a non-zero entry in the corresponding row. We then use the following variation of Hall's marriage theorem to complete the proof by representing the final condition~\eqref{eq:result} in terms of a matching size on the constructed bipartite graph. 
\begin{thm}\label{thm:hull}
Let $G=(A\cup B,E)$ be a bipartite graph. $G$ has a matching of size $k$ if and only if the size of the neighboring set $|N(I)|\geq |I|-|B|+k$ for any $I\subseteq B$.
\end{thm}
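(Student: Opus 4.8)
The plan is to recognize Theorem~\ref{thm:hull} as the \emph{deficiency} (or \emph{defect}) form of Hall's marriage theorem and to derive it from the classical $B$-saturating Hall condition by a short padding construction. Throughout I read ``$G$ has a matching of size $k$'' as $\nu(G)\ge k$, where $\nu$ is the maximum matching number; I would state this reading explicitly at the outset. First I would dispose of the degenerate ranges of $k$: if $k\le 0$ both sides hold trivially ($\nu(G)\ge 0$ always, and $|N(I)|\ge 0\ge |I|-|B|+k$), while if $k>|B|$ both sides fail (no matching exceeds $|B|$, and the displayed inequality already breaks at $I=\emptyset$, since $0\ge k-|B|$ is false). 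So I may assume $0<k\le|B|$.

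Next I would introduce the auxiliary bipartite graph $G'=(A'\cup B,E')$ obtained from $G$ by adjoining to $A$ a set $D$ of $|B|-k$ new ``dummy'' vertices, each joined to every vertex of $B$; thus $A'=A\cup D$ and $E'=E\cup\{\,db:d\in D,\ b\in B\,\}$. The key claim to establish is the equivalence: $G'$ has a matching saturating $B$ if and only if $\nu(G)\ge k$. For one direction, in a $B$-saturating matching of $G'$ at most $|D|=|B|-k$ edges are incident to $D$, so at least $k$ matching edges lie inside $G$ and form a matching of size $\ge k$ there; for the converse, given a size-$k$ matching of $G$ covering $B''\subseteq B$ with $|B''|=k$, I match the remaining $|B\setminus B''|=|B|-k=|D|$ vertices of $B$ bijectively to $D$ (possible since each dummy sees all of $B$), producing a $B$-saturating matching of $G'$.

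Finally I would apply the classical Hall theorem to $G'$: it has a matching saturating $B$ iff $|N_{G'}(I)|\ge|I|$ for every $I\subseteq B$. For $I=\emptyset$ this is automatic; for $I\neq\emptyset$ every dummy is a neighbor of $I$, so $N_{G'}(I)=N_G(I)\,\dot\cup\,D$ (disjoint union) and hence $|N_{G'}(I)|=|N_G(I)|+|B|-k$, turning the Hall inequality into exactly $|N_G(I)|\ge|I|-|B|+k$, which also holds vacuously at $I=\emptyset$. Chaining this with the equivalence of the previous paragraph yields the theorem. As an alternative, if one is willing to cite more, the whole statement is a one-line rearrangement of the Ore/König deficiency formula $\nu(G)=|B|-\max_{I\subseteq B}\bigl(|I|-|N(I)|\bigr)$: indeed $\nu(G)\ge k \iff \max_{I}\bigl(|I|-|N(I)|\bigr)\le|B|-k \iff \forall I:\ |N(I)|\ge|I|-|B|+k$.

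I do not expect a genuine mathematical obstacle here, since this is a standard rephrasing; the only points that require care are the boundary cases on $k$ (handled above) and keeping the dummy count at exactly $|B|-k$, which is precisely what makes the padded Hall inequality collapse onto the asserted one.
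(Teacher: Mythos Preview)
Your proof is correct and is the standard derivation of the deficiency form of Hall's theorem via a dummy-vertex padding. Note, however, that the paper does not actually supply its own proof of this statement: it introduces Theorem~\ref{thm:hull} as ``the following variation of Hall's marriage theorem'' and then uses it directly, treating it as a known result. So there is no paper proof to compare against; your argument simply fills in a gap the authors left as background. The alternative one-line appeal to the K\"onig--Ore deficiency formula you mention at the end is equally legitimate and is probably closer in spirit to what the authors had in mind when citing this as a standard variant.
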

We first show that \eqref{eq:result} $\Rightarrow$ \eqref{eq:condthree}. Assume that \eqref{eq:result} holds and fix $Y_i \subseteq [m_i], i\in[K]: \sum_{i=1}^K |Y_i|=R$. Let $J^* \subseteq [n]$ be such that $\sum_{i=1}^K \dim(\mathcal{S}_{J^*} \cap \mathcal{B}_{i,*,Y_i}) \geq |J^*|+\tau$. For each $i \in [K]$, let $c_1^{(i)},...,c_{|Y_i|}^{(i)}$ be $|Y_i|$ column vectors that form a basis for $\mathcal{B}_{i,*,Y_i}$ and a subset of these vectors form a basis for $\mathcal{S}_{J^*} \cap \mathcal{B}_{i,*,Y_i}$. Now, let $C_i=\left\{c_1^{(i)},...,c_{|Y_i|}^{(i)}\right\}$ and construct a bipartite graph $G=(A \cup C,E)$ where $A=\{1,2,\ldots,n\}$ and $C=\{C_1,\ldots,C_K\}$ is a multiset consisting of the elements of the sets $C_i, i\in [K]$. $A$ and $C$ are the two partite vertex sets of the graph $G$. For any $i\in [n]$ and $c\in C$, $(i,c)\in E$ if and only if the vector $c$ has a non-zero entry in the $i^{\textrm{th}}$ position, i.e., if $c=(c_1,\ldots,c_n)$ then $(a_i,c)\in E \Leftrightarrow c_i \neq 0$. 
Now, note that all the vertices in $C$ that correspond to vectors in $\mathcal{S}_{J^*}$ can only be connected to vertices in $A$ that correspond to the set $J^*$. Since $\sum_{i=1}^K \dim(\mathcal{S}_{J^*} \cap \mathcal{B}_{i,*,Y_i}) \geq |J^*|+\tau$, it follows that there is a subset of the partite set $C$ of size at least $|J^*|+\tau$ whose neigboring set has size at most $|J^*|$. It follows from Theorem~\ref{thm:hull} that there is no matching of size $R-\tau+1$ in the bipartite graph $G$. It follows that for any $X \subseteq [n]: |X| > R-\tau$ and any $K$ sets $Y'_i \subseteq C,i \in [K]: \sum_{i=1}^K |Y'_i|=|X|$, there is no matching between the vertices corresponding to the set $X$ in the partite set $A$ and the vertices in $\{Y'_1,\ldots,Y'_K\}$, and hence, there is a set $J \subseteq X$ such that the vertices corresponding to the set $X \backslash J$ are connected to less than $|X \backslash J|$ vertices in $\bigcup_{i=1}^K Y'_i$. It follows that for this choice of the set $J$, there are more than $|J|$ vertices in $\bigcup_{i=1}^K Y'_i$ that are not connected to any vertex in the set $X \backslash J$ in the partite set $A$. Now, if $\forall i\in[K], Y'_i \subseteq Y_i$, then $\sum_{i=1}^K \dim(\mathcal{S}_{J \cup \bar{X}} \cap \mathcal{B}_{i,*,Y'_i}) >|J|$. We now use the above argument to prove that \eqref{eq:condthree} holds as follows. For all $ Y'_i\subseteq[m_i], i\in[K]$ such that $\sum_{i=1}^K |Y'_i| > R-\tau$, we find $Y_i, i\in[K]$ such that $Y'_i \subseteq Y_i, \forall i \in [K]$ and $\sum_{i=1}^K |Y_i|=R$, and then use the above argument to show that, 
\begin{gather*}
\exists J \subseteq [n]: \sum_{i=1}^K \dim(\mathcal{S}_{J} \cap \mathcal{B}_{i,*,Y_i}) \geq |J|+\tau \Rightarrow\\ \forall X \subseteq [n]: |X| = \sum_{i=1}^K |Y'_i|, \\\exists J\subseteq X:
 \sum_{i=1}^K \dim(\mathcal{S}_{J \cup \bar{X}} \cap \mathcal{B}_{i,*,Y'_i}) >|J|,
\end{gather*}
and hence, \eqref{eq:condthree} follows.

We now show that \eqref{eq:condthree} $\Rightarrow$ \eqref{eq:result} by contradiction. Suppose that \eqref{eq:result} does not hold and fix $Y_i \subseteq [m_i], i\in[K]: \sum_{i=1}^K |Y_i|=R$ such that for any $J\subseteq [n]$, $\sum_{i=1}^K \dim(\mathcal{S}_{J} \cap \mathcal{B}_{i,*,Y_i}) <  |J|+\tau$. For each $i \in [K]$, let $c_1^{(i)},...,c_{|Y_i|}^{(i)}$ be $|Y_i|$ column vectors that form a basis for $\mathcal{B}_{i,*,Y_i}$ and define $C_i=\left\{c_1^{(i)},...,c_{|Y_i|}^{(i)}\right\}$. Now, we construct a bipartite graph $G=(A \cup C,E)$ where $A=\{1,2,\ldots,n\}$ and $C=\{C_1,\ldots,C_K\}$ is a multiset consisting of the elements of the sets $C_i, i\in [K]$. Also, for any $i \in [n], c\in C$, $(i,c) \in E \Leftrightarrow c_i \neq 0$.

From the selection of the sets $Y_i, i\in[K]$, we have that for any $I\subseteq C$, the neighboring set $N(I) \subseteq A$ has size $|N(I)| > |I|-\tau$. It follows from Theorem \ref{thm:hull} that there is a matching in $G$ of size $|C|-\tau+1$. Such a matching will include $R-\tau+1$ edges incident on $R-\tau+1$ nodes in $A$ (which we denote by $X$) and $R-\tau+1$ nodes in $C$ (which we denote by $Y'$). For each $i \in [K]$, let $Y'_i=\{j:c_j^{(i)}\in Y'\}$. Now, in order to show that \eqref{eq:condthree} is not true, it suffices to show that for the specific choice of $X$, $Y'_i, i\in[K]$ mentioned above, the following holds:
\begin{align}\label{contradict}
\forall J\subseteq X:
 \sum_{i=1}^K \dim(\mathcal{S}_{J \cup X^c} \cap \mathcal{B}_{i,*,Y'_i}) \leq |J|
\end{align}

Fix any $J\subseteq X$. We know that there exist $|X\setminus J|$ nodes in $Y'$ that are connected to the vertices corresponding to elements in $X\setminus J$ through the edges in the matching. This implies that there exist $|X\setminus J|$ columns in $\left[B_{1,*,Y'_1}~\ldots~B_{K,*,Y'_K}\right]$ which have at least one non-zero entry in $X\setminus J$ and therefore cannot belong to $\mathcal{S}_{J\cup X^c}$. This means that $\sum_{i=1}^K \dim(\mathcal{S}_{J \cup X^c} \cap \mathcal{B}_{i,*,Y'_i})$ cannot be greater than the dimension of the span of the remaining columns, which is at most $\left(\sum_{i=1}^K |Y'_i|\right)-|X\setminus J|=|J|$, verifying (\ref{contradict}). Hence \eqref{eq:condthree} does not hold.
\section{Application to Topological Interference Management}\label{sec:tim}
\noindent In this section, we study an application of Theorem~\ref{thm:main} to characterize the linear symmetric degrees of freedom for a class of topological interference management problems as defined next.

\subsection{Topological Interference Management: System Model and Problem Overview}
We consider $K$-user interference networks composed of $K$ transmitter nodes $\{\text{T}_i\}_{i=1}^K$ and $K$ receiver nodes $\{\text{D}_i\}_{i=1}^K$. Each transmitter $\text{T}_i$ intends to deliver a message $W_i\in\mathcal{W}_i$ to its corresponding receiver $\text{D}_i$. We assume that each receiver is subject to interference only from a specific subset of the other transmitters and the interference power that it receives from the other transmitters is below the noise level. This leads to a \emph{network topology} indicating the network interference pattern.

Each transmitter $\text{T}_i$ intends to send a vector $\mathbf{w}_i\in\mathbb{R}^{m_i}$ of $m_i$ symbols to its desired receiver $\text{D}_i$ over $n$ time slots. This message is encoded to the transmit vector $\mathbf{x}_i=B_i \mathbf{w}_i$, where $B_i$ denotes the linear beamforming \emph{precoding} matrix of transmitter $i$, which is of size $n\times m_i$. The received signal of receiver $j$ over $n$ time slots is given by,
\begin{align*}
\mathbf{y}_j=(\Lambda_{jj} B_j) \mathbf{w}_j+\sum_{i\in I_j} (\Lambda_{ij} B_i) \mathbf{w}_i+\mathbf{z}_j,
\end{align*}
where $I_j$ is the set of transmitters interfering at receiver $j$, $\Lambda_{ij}$ is the $n\times n$ diagonal matrix with the $k^{\textrm{th}}$ diagonal element being equal to the value of the channel coefficient between transmitter $i$ and receiver $j$ in time slot $k$, and $\mathbf{z}_j$ is the noise vector at receiver $j$ where each of its elements is an i.i.d. $\mathcal{N}(0,N)$ random variable, $N$ being the noise variance. The channel gain values are assumed to be identically distributed and drawn from a continuous distribution at each time slot. We assume that transmitters have no knowledge about the realization of the channel gains except for the topology of the network. However, the receivers have full channel state information. We refer to this assumption as no CSIT (channel state information at the transmitters) beyond topology. Each precoding matrix $B_i$ is an $n \times m_i$ matrix that can only depend on the knowledge of topology. 
At receiver $j$, the interference subspace denoted by $\mathcal{I}_j$ can be written as
\begin{align*}
\mathcal{I}_j=\bigcup_{i\in I_j} \mathsf{colspan}(\Lambda_{ij} B_i).
\end{align*}

In order to decode its desired symbols, receiver $j$ projects its received signal subspace given by $\mathsf{colspan} (\Lambda_{jj} B_j)$ onto the subspace orthogonal to $\mathcal{I}_j$, and its successful decoding condition can be expressed as
\begin{align}\label{eq:decodability}
\text{dim}\left(\mathsf{Proj}_{{\cal I}_{j}^c} \mathsf{colspan} \left(\Lambda_{jj} B_j\right)\right)=m_j.
\end{align}

If the above decodability condition is satisfied at all the receivers $\{\text{D}_j\}_{j=1}^K$ for almost all realizations of channel gains, then the linear degrees of freedom (LDoF) tuple $(\frac{m_1}{n},...,\frac{m_K}{n})$ is achievable under the aforementioned linear scheme. The linear symmetric degrees of freedom $\text{LDoF}_{sym}$ is defined as the supremum $d$ for which the LDoF tuple $(d,...,d)$ is achievable.

In this setting, the main goal is to characterize the linear symmetric degrees of freedom for general network topologies. There have been multiple attempts in the literature to resolve this problem. In particular, in \cite{jafar} it was shown that if the channel gains in the network remain constant for a sufficiently large time, then topological interference management is closely connected to the classical index coding problem (see, e.g., \cite{birk,baryossef,rouayheb}), and via this connection, a class of linear interference management schemes has been introduced which rely on the network topology knowledge to align the interference over time at unintended receivers. Furthermore, in~\cite{topology,topology_isit}, the authors considered a class of retransmission-based schemes that only allows transmitters to resend their symbols in order to assist with the neutralization of interference at the receivers. Besides their simplicity, these schemes are robust to channel variations over time and were shown to be optimal in terms of the symmetric degrees-of-freedom (DoF) in many classes of topologies via the outer bounds developed in \cite{topology,topology_isit}. More recently, the authors in \cite{avoidance_jafar} have characterized a necessary and sufficient condition for a topology under which interference avoidance can achieve the whole DoF region.

In the following two sections, we use our derived rank loss condition in order to address two problems in the topological interference management framework. First, in Section \ref{sec:best}, we characterize the ``best'' topologies for topological interference management; i.e., the topologies for which half linear symmetric DoF is achievable for the case of time-varying channels. For the case where the channel gains in the network are assumed to remain constant for a long-enough period of time, the necessary and sufficient condition on the network topology for achieving half linear symmetric DoF was characterized in \cite{jafar}. For the case of case of time-varying channels, a sufficient condition for the achievability of half linear symmetric DoF was derived in \cite{multialign}. In the next section, we close the gap by introducing a necessary and sufficient condition under which the half linear symmetric DoF is achievable for the case of time-varying channels (i.e., without requiring the channels to remain fixed for a long enough time). As for the second problem, in Section \ref{sec:nonoverlapping} we characterize the linear symmetric degrees of freedom for a class of network topologies with exclusive interference alignment sets. This helps to resolve the characterization of the linear symmetric DoF for a set of previously open problems considered in~\cite{topology}.

\subsection{Identifying the ``Best'' Topologies}\label{sec:best}
In this section, we use Theorem~\ref{thm:main} to identify network topologies where a linear symmetric DoF of $\frac{1}{2}$ is achievable. It is easy to see that in topologies where at least one transmitter is interfering at an undesired receiver, the symmetric DoF is upper bounded by $\frac{1}{2}$. Thus, the topologies in which half symmetric DoF is achievable represent the ``best'' topologies that one can hope for (from the degrees-of-freedom perspective). 
We start by making the following definition of \emph{reduced conflict graphs}.

\begin{defn}\label{redgraph}
The reduced conflict graph of a $K$-user interference network is a directed graph $G=(V,A)$ with $V=\{1,2,...,K\}$. As for the edges, vertex $i$ is connected to vertex $j$ (i.e., $(i,j)\in A$) if and only if $i \neq j$ and the following two conditions hold:
\begin{enumerate}
\item Transmitter $i$ is connected to receiver $j$.
\item $\exists s,k \in \{1,2,\ldots,K\}\backslash\{i\}, s\neq k$ such that both transmitter $i$ and transmitter $s$ are connected to receiver $k$. 
\end{enumerate}
\end{defn}

Clearly, the difference between the reduced conflict graph defined above and the regular conflict graph (used e.g. in \cite{topology}) is the additional condition 2 in Definition \ref{redgraph}, which implies that the set of edges in the reduced conflict graph is a subset of the edges in the regular conflict graph, hence the name ``reduced'' conflict graph.

Having the above definition, we characterize the best topologies in the following theorem.
\begin{thm}\label{thm:halfdof}
For a $K$-user interference network with arbitrary topology, half linear symmetric DoF can be achieved if and only if the reduced conflict graph of the network is bipartite. 
\end{thm}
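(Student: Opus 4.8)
\emph{Proof proposal.} The plan is to reduce the decodability requirement~\eqref{eq:decodability} to a statement about coordinate (sparse) subspaces via Theorem~\ref{thm:main}, and then to identify that statement with proper $2$-colorability of the reduced conflict graph. Call transmitter $i$ \emph{active} if there is a receiver $\text{D}_k$ with $k\neq i$, $i\in I_k$, and $|I_k|\ge 2$. By Definition~\ref{redgraph}, ``active'' is exactly the condition on $i$ in item~2 there, so $i$ is active if and only if it has an outgoing edge in the reduced conflict graph $G$; in particular every edge of $G$ has an active tail. The achievability direction will produce an explicit scheme with $n=2$ and $m_i=1$, which together with the trivial bound $\text{LDoF}_{sym}\le\tfrac12$ (valid whenever there is any interference link) gives $\text{LDoF}_{sym}=\tfrac12$.

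\textbf{Achievability.} Suppose $G$ is bipartite with color classes $V_1,V_2$. Take $n=2$, $m_i=1$, with beamforming vector $b_i\in\mathbb R^2$, and set $b_i=e_1$ if $i\in V_1$ is active, $b_i=e_2$ if $i\in V_2$ is active, and $b_i$ equal to any vector with both coordinates nonzero if $i$ is not active (here $e_1,e_2$ is the standard basis). I would then check~\eqref{eq:decodability} receiver by receiver. If $|I_j|\ge 2$, every interferer of $\text{D}_j$ is active and adjacent to $j$ in $G$, hence lies in the class opposite to $j$; so all the vectors $\Lambda_{ij}b_i$, $i\in I_j$, lie on one coordinate axis, $\mathcal I_j$ is that one-dimensional line, and it does not contain $\Lambda_{jj}b_j$. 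If $|I_j|=1$, $\mathcal I_j$ is automatically a line, and using that the channel gains are drawn from a continuous distribution (and that the lone interferer uses a vector with both coordinates nonzero when it is not active) one checks $\Lambda_{jj}b_j\notin\mathcal I_j$ almost surely. If $I_j=\emptyset$ there is nothing to do. Hence the symmetric LDoF tuple $(\tfrac12,\dots,\tfrac12)$ is achievable.

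\textbf{Converse (contrapositive).} Suppose $G$ has an odd cycle $C=(v_1,\dots,v_{2\ell+1})$ and, toward a contradiction, that beamforming matrices $B_i$ of size $n\times\tfrac n2$ satisfy~\eqref{eq:decodability} at every receiver. This is the one place where Theorem~\ref{thm:main} is invoked: at any receiver $\text{D}_k$ with $|I_k|\ge 2$, condition~\eqref{eq:decodability} forces $\dim\mathcal I_k\le\tfrac n2$, i.e., the matrix $[\Lambda_{ik}B_i]_{i\in I_k}$ loses rank by $\tau_k=\tfrac n2$ almost surely, so by Theorem~\ref{thm:main} condition~\eqref{eq:result} holds for this sub-family; taking the index sets $Y_i$ to be full for two interferers and empty for the remaining ones, and varying the pair, \eqref{eq:result} forces a single coordinate set $J_k$ with $|J_k|=\tfrac n2$ and $\mathcal B_i=\mathcal S_{J_k}$ for every $i\in I_k$. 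Thus every active transmitter $i$ carries a well-defined coordinate set $J^{(i)}$, $|J^{(i)}|=\tfrac n2$, with $\mathcal B_i=\mathcal S_{J^{(i)}}$; and for every edge $(i,j)\in A$, the requirement at $\text{D}_j$ that $\Lambda_{jj}\mathcal B_j$ meet $\mathcal S_{J^{(i)}}\subseteq\mathcal I_j$ only at the origin forces $J^{(j)}=(J^{(i)})^c$ whenever $j$ is also active. A non-active vertex of $C$ has both of its cycle edges incoming (outgoing edges need active tails), hence at least two active interferers on $C$, whose coordinate sets must coincide (their sparse subspaces together span a subspace of dimension $\le\tfrac n2$); assign such a vertex the complement of this common coordinate set. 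The result is a proper $2$-coloring of the odd cycle $C$ with colors drawn from a single pair $\{S,S^{c}\}$, which is impossible. Hence no such scheme exists, so half linear symmetric DoF is not achievable.

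\textbf{Main obstacle.} The crux is the structural step in the converse: showing that the interfering beamforming subspaces must literally be coordinate subspaces $\mathcal S_{J_k}$, \emph{uniformly over all block lengths $n$ and all designs $B_i$}, which is exactly what the combinatorial rank-loss characterization of Theorem~\ref{thm:main} provides and what ad hoc linear algebra would not deliver in general. This is also the place where one must run the rank-loss bookkeeping with slack in order to rule out schemes with $m_i/n$ strictly below but approaching $\tfrac12$ (so that the supremum $\text{LDoF}_{sym}$ does not creep up to $\tfrac12$). A secondary, purely combinatorial difficulty is that an edge $(i,j)\in A$ only certifies that $i$ is active, so the non-active vertices of the odd cycle must be treated separately to guarantee that the $2$-coloring is well defined and lives in a single pair $\{S,S^{c}\}$.
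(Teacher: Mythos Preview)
Your achievability argument is essentially identical to the paper's: a two-slot scheme, coordinate vectors for active users according to the bipartition, and a generic vector for non-active users. The case analysis for~\eqref{eq:decodability} matches.

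For the converse, your overall route is also the paper's: invoke Theorem~\ref{thm:main} at each receiver with $|I_k|\ge 2$ to force sparse-subspace structure on the interfering beamformers, then feed this into a graph-theoretic contradiction when $G$ contains an odd cycle. In the exact case $m_i=n/2$ your deduction that~\eqref{eq:result} with $\tau=n/2$ forces $|J_k|=n/2$ and $\mathcal B_i=\mathcal S_{J_k}$ for every $i\in I_k$ is correct, and your treatment of non-active cycle vertices (both cycle edges incoming, hence $|I_v|\ge2$, hence the two active neighbours share a common $J$) is clean and a bit more explicit than the paper's final ``three users'' step.

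The genuine gap is the one you flag yourself: your converse, as written, only rules out exact achievability of $(1/2,\dots,1/2)$. Since $\text{LDoF}_{\text{sym}}$ is a supremum, you must exclude all sequences with $m_i=n/2-\epsilon_n$ and $\epsilon_n\to 0$. In that regime, Theorem~\ref{thm:main} no longer forces $\mathcal B_i=\mathcal S_{J_k}$ exactly; it only gives $\dim(\mathcal S_{J}\cap\mathcal B_i)\ge n/2-O(\epsilon_n)$. So the set $J^{(i)}$ is no longer uniquely determined by $\mathcal B_i$, the complementarity $J^{(j)}=(J^{(i)})^c$ becomes ``nearly disjoint,'' and errors accumulate around the odd cycle. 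This is precisely the work done by Lemmas~\ref{lem:halfdofcondition} and~\ref{lem:conflict} in the paper: the first extracts, for each alignment set, a common sparse set $\tilde J$ with $|\tilde J|\ge n/2-cK\epsilon_n$ and $\dim(\cap_i\mathcal B_i\cap\mathcal S_{\tilde J})\ge n/2-cK\epsilon_n$; the second converts decodability into $|J_i\cap J_k|\le 2cK\epsilon_n$ along edges. Your odd-cycle argument then goes through with $O(\ell K\epsilon_n)$ error terms, yielding a contradiction once $\epsilon_n$ is small relative to $n$. Without that quantitative layer, the converse is incomplete.
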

\begin{remk}\label{remk:avoidance}
It is easy to see that the interference avoidance scheme, which schedules the users in an independent set of the regular conflict graph of the network, can achieve half symmetric DoF if and only if the regular conflict graph of the network is bipartite. Since the reduced conflict graph is the same as the regular conflict graph with the removal of edges that do not satisfy condition $2$ in Definition~\ref{redgraph}, there exist topologies where the reduced conflict graph is bipartite but the regular conflict graph is not bipartite. One such example topology is illustrated in Figure \ref{ex_gain} together with its regular and reduced conflict graphs in Figure \ref{conf_graph}.
\begin{figure}[h]
\centering
\begin{subfigure}{0.3\textwidth}
\centering
\includegraphics[trim = 2.68in 2.3in 5.4in 2.2in, clip,width=0.7\textwidth]{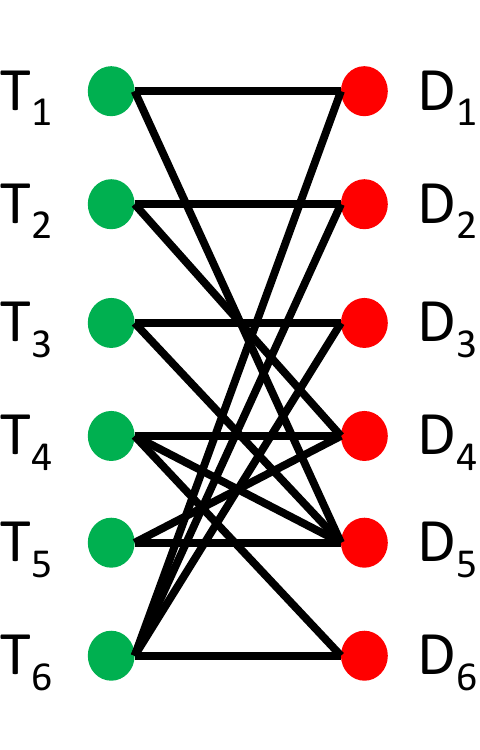}
\caption{}
\label{ex_gain}
\end{subfigure}
~
\begin{subfigure}{0.3\textwidth}
\centering
\includegraphics[trim = 5.4in 2.3in 1.9in 2.2in, clip,width=0.9\textwidth]{ex_top_avoidance}
\caption{}
\label{conf_graph}
\end{subfigure}
\caption{(a) A topology in which half symmteric DoF is achievable, but not by interference avoidance, and (b) the corresponding regular and reduced conflict graph. The dashed red lines exist in the conflict graph and are absent in the reduced conflict graph.}
\end{figure}
The black edges are the edges in the reduced conflict graph and the dashed red edges are the ones that exist in the regular conflict graph but are removed in the reduced conflict graph. It is clear that the reduced conflict graph is bipartite (with the two partite sets being $\{1,3,4\}$ and $\{2,5,6\}$), hence $d_{sym}=\frac{1}{2}$ is achievable in this topology. However the addition of the dashed red edges in the regular conflict graph removes the bipartiteness property of the graph and therefore  interference avoidance cannot achieve half symmetric DoF in this topology. In fact, since the chromatic number of the regular conflict graph in Figure \ref{conf_graph} is 3, interference avoidance can only achieve a symmetric DoF of $\frac{1}{3}$. The suboptimality of interference avoidance can also be seen as a result of the network topology in Figure \ref{ex_gain} not being chordal \cite{avoidance_jafar}.
%for these examples, half symmetric DoF of is achievable by a retransmission-based scheme, whereas it cannot be achieved by interference avoidance. 
\end{remk}

%We now show a representative example where the achieved DoF gain by retransmission cannot be achieved by interference avoidance.

%\begin{ex}
%A 6-user topology for which half symmetric DoF is achievable by a retransmission-based scheme, but it is not achievable by interference avoidance is depicted in Figure \ref{ex_gain}.

%Figure \ref{conf_graph} shows the regular and reduced conflict graph of this network. 
%\end{ex}

\begin{remk}
In \cite{jafar}, a linear scheme based on aligning the interference at unintended receivers was considered. The alignment scheme relies on the channel remaining constant for a long time (large coherence time). It was shown that half symmetric DoF is achievable if and only if there is no conflict between any two nodes that cause interference at a third receiver, i.e., there is no internal conflict within an alignment set. It is easy to show that the condition in Theorem \ref{thm:halfdof} implies the absence of internal conflicts but the opposite is not true; this is consistent with intuition as we make no assumption on the coherence time of the channel. Interestingly, it is shown in~\cite{ICC} that the topologies for which interference alignment can achieve half symmetric DoF while retransmission cannot achieve it comprise a negligible fraction of possible topologies in a heterogeneous network scenario of practical interest.
\end{remk}

\begin{proof}[Proof of Theorem \ref{thm:halfdof}] 
We first prove the converse. In order to achieve $\frac{n}{2}$ symmetric DoF, there has to be a sequence $(\epsilon_n, n\in{\bf Z}^+)$, such that $\epsilon_n \rightarrow 0$ and $\frac{n}{2}-\epsilon_n$ symmetric DoF is achievable by coding over $n$ time slots. Consider some generic receiver $l$ in the network subject to interference from transmitters $\{\text{T}_i\}_{i=1}^K$. Then the decodability condition is equivalent to the dimension of the interference subspace being at most $\frac{n}{2}+\epsilon_n$ almost surely; i.e.,
\begin{align}\label{eq:halfdofrankloss}
\emph{rank} ([\Lambda_{1l} B_1 ~~\Lambda_{2l} B_2~~ ... ~~ \Lambda_{Kl} B_K])\overset{a.s.}{\leq} \frac{n}{2}+\epsilon_n.
\end{align}

Then, we know from Theorem~\ref{thm:main} that \eqref{eq:halfdofrankloss} implies the following for $K \geq 2$ and $m_i=\frac{n}{2}-\epsilon_n, \forall i\in[K]$.
\begin{align}\label{eq:halfcondone}
\forall Y_i\subseteq\left[\frac{n}{2}\right], i \in [K] \emph{ s.t. }\sum_{i=1}^K |Y_i|=\min\left(n, K\left(\frac{n}{2}-\epsilon_n\right)\right),
\exists J \subseteq [n]: \sum_{i=1}^K \dim(\mathcal{S}_{J} \cap \mathcal{B}_{i,*,Y_i}) \geq |J|+\frac{n}{2} - 3\epsilon_n.
\end{align}

Now, we can state the following lemma that simplifies the condition~\eqref{eq:halfcondone} resulting from Theorem~\ref{thm:main} for this particular setting.
\begin{lem}\label{lem:halfdofcondition}
For the case where $K \geq 2, m_i=\frac{n}{2}-\epsilon_n, \forall i\in[K]$, the statement in~\eqref{eq:halfcondone} implies the following.
\begin{gather}\label{eq:halfcondtwo}
\exists \tilde{J} \subseteq [n]: \left|\tilde{J}\right|\geq\frac{n}{2}-c K \epsilon_n, \dim\left(\cap_{i=1}^K \mathcal{B}_i \cap \mathcal{S}_{\tilde{J}}\right) \geq \frac{n}{2}-c K \epsilon_n,
\end{gather}
where $c$ is a constant integer.
\end{lem}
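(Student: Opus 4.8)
\emph{Strategy.} The plan is to feed the hypothesis~\eqref{eq:halfcondone} a carefully chosen family of column-subset tuples so that it collapses into pairwise intersection statements, and then to show that the sparse subspaces it returns all contain a large common slice of one fixed subspace, namely $\mathcal{B}_1$.

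\emph{Step 1 --- choosing the invocations.} I would fix index $1$ as a pivot and, for each $j\in\{2,\dots,K\}$, apply~\eqref{eq:halfcondone} with $Y_1=[m_1]$, $Y_j=[m_j]$, and the remaining $K-2$ sets $Y_i$ ($i\notin\{1,j\}$) chosen so that $\sum_{i=1}^{K}|Y_i|=\min\!\left(n,K\!\left(\tfrac{n}{2}-\epsilon_n\right)\right)$. Since $\epsilon_n=o(n)$, for all large $n$ this target equals $n$ when $K\ge 3$, so $2\epsilon_n$ of ``padding'' must be distributed over the $K-2$ leftover indices (feasible for large $n$ since each $m_i\ge 2\epsilon_n$), while for $K=2$ no padding is possible and the only admissible choice is $Y_1=[m_1],\,Y_2=[m_2]$. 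Let $J_j\subseteq[n]$ be the set that~\eqref{eq:halfcondone} produces. Using $\dim(\mathcal{S}_{J_j}\cap\mathcal{B}_{i,*,Y_i})\le|Y_i|$ for the padding indices, whose sizes sum to at most $2\epsilon_n$, I get $\dim(\mathcal{S}_{J_j}\cap\mathcal{B}_1)+\dim(\mathcal{S}_{J_j}\cap\mathcal{B}_j)\ge|J_j|+\tfrac{n}{2}-5\epsilon_n$; then, since $\mathcal{S}_{J_j}\cap\mathcal{B}_1$ and $\mathcal{S}_{J_j}\cap\mathcal{B}_j$ are both subspaces of the $|J_j|$-dimensional space $\mathcal{S}_{J_j}$, the elementary bound $\dim(U\cap V)\ge\dim U+\dim V-\dim(\text{ambient})$ yields $\dim(\mathcal{S}_{J_j}\cap\mathcal{B}_1\cap\mathcal{B}_j)\ge\tfrac{n}{2}-5\epsilon_n$.

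\emph{Step 2 --- combining over $j$.} I would set $\tilde J=\bigcap_{j=2}^{K}J_j$, so that $\mathcal{S}_{\tilde J}=\bigcap_{j=2}^{K}\mathcal{S}_{J_j}$, and observe that $\bigcap_{j=2}^{K}\!\left(\mathcal{S}_{J_j}\cap\mathcal{B}_1\cap\mathcal{B}_j\right)$ is exactly $\left(\bigcap_{i=1}^{K}\mathcal{B}_i\right)\cap\mathcal{S}_{\tilde J}$. Each of the $K-1$ subspaces being intersected lies in $\mathcal{B}_1$ with codimension (relative to $\mathcal{B}_1$) at most $\left(\tfrac{n}{2}-\epsilon_n\right)-\left(\tfrac{n}{2}-5\epsilon_n\right)=4\epsilon_n$, and since codimension is subadditive under intersection, the full intersection has codimension at most $4(K-1)\epsilon_n$ in $\mathcal{B}_1$, hence dimension at least $\tfrac{n}{2}-(4K-3)\epsilon_n\ge\tfrac{n}{2}-4K\epsilon_n$. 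This gives at once both $\dim\!\left(\bigcap_{i=1}^{K}\mathcal{B}_i\cap\mathcal{S}_{\tilde J}\right)\ge\tfrac{n}{2}-4K\epsilon_n$ and $|\tilde J|=\dim\mathcal{S}_{\tilde J}\ge\tfrac{n}{2}-4K\epsilon_n$, so~\eqref{eq:halfcondtwo} holds with $c=4$.

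\emph{Main obstacle.} The crux --- and the place I would be most careful --- is recognizing that although $J_2,\dots,J_K$ arise from independent invocations of~\eqref{eq:halfcondone} and are a priori unrelated, they cannot be spread apart: each $\mathcal{S}_{J_j}$ is forced to contain a codimension-$O(\epsilon_n)$ subspace of the \emph{fixed} $\left(\tfrac{n}{2}-\epsilon_n\right)$-dimensional space $\mathcal{B}_1$, so these slices (hence the index sets $J_j$ themselves) must overlap in a set of size $\tfrac{n}{2}-O(K\epsilon_n)$; moreover the $j$-th slice also lies in $\mathcal{B}_j$, so the common slice automatically lies in every $\mathcal{B}_i$. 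The remainder is routine dimension counting, but I would still verify carefully that one absolute constant $c$ works uniformly in $K$ and for all large $n$ --- which is precisely why the padding must be kept to $O(\epsilon_n)$ and why its feasibility relies on the growth rate $\epsilon_n=o(n)$.
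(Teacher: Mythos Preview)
Your argument is correct and follows the same overall strategy as the paper's proof: invoke~\eqref{eq:halfcondone} on pairs of full column sets, extract from each invocation a large piece of $\mathcal{B}_i\cap\mathcal{B}_j$ inside some sparse subspace, and then intersect everything using a shared index as anchor. The one structural difference is that the paper chains through consecutive pairs $(1,2),(2,3),\ldots,(K-1,K)$, at each step using the common $\mathcal{B}_i$ to force the two sparse supports to overlap, whereas you use a star centred at index~$1$, intersecting all of $\mathcal{S}_{J_j}\cap\mathcal{B}_1\cap\mathcal{B}_j$ simultaneously inside the fixed ambient space $\mathcal{B}_1$. Your star version is a bit cleaner: the codimension bound is applied once rather than iteratively, and it delivers an explicit constant $c=4$, while the paper's chain loses a fixed amount per step and only asserts the existence of some constant $c$. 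Either organisation establishes the lemma; neither introduces an idea the other lacks.
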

\begin{proof}
Assume that~\eqref{eq:halfcondone} holds, and set $Y_1=Y_2=\left[\frac{n}{2}-\epsilon_n\right]$. It then follows that,
\begin{equation}\label{eq:halfproofone}
\exists J\subseteq[n]: \dim(\mathcal{S}_{J} \cap \mathcal{B}_{1})+\dim(\mathcal{S}_{J} \cap \mathcal{B}_{2}) \geq |J| + \frac{n}{2}-5 \epsilon_n.
\end{equation}

Since $\dim(\mathcal{S}_{J} \cap \mathcal{B}_{i})\leq \min\left(|J|,m_i\right), \forall J\subseteq[n], i\in[K]$, it follows from~\eqref{eq:halfproofone} that there exists $J\subseteq[n], |J| \geq \frac{n}{2}-5\epsilon_n, \dim\left(\mathcal{B}_1 \cap \mathcal{B}_2 \cap S_J\right) \geq \frac{n}{2}-9\epsilon_n$; let this set $J$ be called $J_{1,2}$. Similarly, by taking $Y_2=Y_3=\left[\frac{n}{2}-\epsilon_n\right]$, there exists $J \subseteq[n],|J| \geq \frac{n}{2}-5\epsilon_n, \dim\left(\mathcal{B}_2 \cap \mathcal{B}_3 \cap S_J\right) \geq \frac{n}{2}-9\epsilon_n$; call this set $J$ as $J_{2,3}$. Since $\dim\left(\mathcal{B}_2 \cap S_{J_{1,2}}\right) \geq \frac{n}{2}-9\epsilon_n$ and $\dim\left(\mathcal{B}_2 \cap S_{J_{2.3}}\right) \geq \frac{n}{2}-9\epsilon_n$, then $\left|J_{1,2} \cap J_{2,3}\right| \geq \frac{n}{2}-17\epsilon_n$. Also, let $J_{1,2,3}=J_{1,2} \cap J_{2,3}$, then $\dim\left(\mathcal{B}_1 \cap \mathcal{B}_2 \cap \mathcal{B}_3 \cap S_{J_{1,2,3}}\right) \geq \frac{n}{2}-17 \epsilon_n$. Proceeding in the same way, we can show that there exists a constant integer $c$ such that if we let $\tilde{J}=J_{1,2}\cap J_{2,3} \cap \ldots \cap J_{K-1,K}$, then $\left|\tilde{J}\right| \geq c K \epsilon_n$ and $\dim\left(\mathcal{B}_1 \cap \ldots \cap \mathcal{B}_K \cap S_{\tilde{J}}\right) \geq c K \epsilon_n$, and hence,~\eqref{eq:halfcondtwo} follows.
\end{proof}

We use the result of Lemma~\ref{lem:halfdofcondition} together with the following lemma to prove the converse of Theorem~\ref{thm:halfdof}.
\begin{lem}\label{lem:conflict}
If transmitter $i$ is connected to receiver $k$, $k \neq i$ and there exist $J_i, J_k \subseteq [n]$ such that $\dim\left(\mathcal{B}_i \cap S_{J_i}\right) \geq \frac{n}{2}-cK\epsilon_n$ and $\dim\left(\mathcal{B}_k \cap S_{J_k}\right) \geq \frac{n}{2}-c K \epsilon_n$, then the decodability condition of~\eqref{eq:decodability} is satisfied at receiver $k$ only if  $|J_i \cap J_k| \leq 2 c K \epsilon_n$.
\end{lem}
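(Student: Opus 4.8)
The plan is to establish the contrapositive-style bound on $|J_i \cap J_k|$ by showing that if the intersection were too large, then the desired signal subspace of receiver $k$ would collide with the interference subspace coming from transmitter $i$, violating \eqref{eq:decodability}. First I would recall that decodability at receiver $k$ requires $\dim(\mathsf{Proj}_{\mathcal{I}_k^c}\mathsf{colspan}(\Lambda_{kk}B_k)) = m_k$, which is equivalent to $\mathcal{I}_k \cap \mathsf{colspan}(\Lambda_{kk}B_k) = \{0\}$, i.e. $\mathsf{colspan}(\Lambda_{kk}B_k)$ intersects trivially with $\mathsf{colspan}(\Lambda_{ik}B_i)$ (since $i \in I_k$ when transmitter $i$ is connected to receiver $k$ and is a genuine interferer). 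So it suffices to show that a large $|J_i \cap J_k|$ forces a nontrivial intersection $\mathsf{colspan}(\Lambda_{ik}B_i) \cap \mathsf{colspan}(\Lambda_{kk}B_k) \neq \{0\}$ almost surely.

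Next I would localize to the common sparse subspace. Let $L = J_i \cap J_k$. From $\dim(\mathcal{B}_i \cap \mathcal{S}_{J_i}) \geq \frac{n}{2} - cK\epsilon_n$ and the fact that vectors in $\mathcal{S}_{J_i}$ live in a $|J_i|$-dimensional coordinate subspace, a counting argument shows $\dim(\mathcal{B}_i \cap \mathcal{S}_L) \geq \dim(\mathcal{B}_i \cap \mathcal{S}_{J_i}) - |J_i \setminus L| \geq \frac{n}{2} - cK\epsilon_n - (|J_i| - |L|)$, and similarly for $k$. The key point is that $\mathcal{B}_i \cap \mathcal{S}_L$ and $\mathcal{B}_k \cap \mathcal{S}_L$ are both subspaces of $\mathcal{S}_L$, which has dimension $|L|$. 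After scaling by the diagonal matrices $\Lambda_{ik}$ and $\Lambda_{kk}$, which preserve the sparse subspace $\mathcal{S}_L$ (diagonal scaling cannot create nonzero entries outside $L$), the subspaces $\Lambda_{ik}(\mathcal{B}_i \cap \mathcal{S}_L)$ and $\Lambda_{kk}(\mathcal{B}_k \cap \mathcal{S}_L)$ still lie inside $\mathcal{S}_L$. If the sum of their dimensions exceeds $|L|$, their intersection is nontrivial, and a generic-scaling (Zippel–Schwartz) argument ensures this intersection is nonzero for almost all channel realizations rather than being destroyed by a coincidental alignment. Writing $d_i = \dim(\mathcal{B}_i \cap \mathcal{S}_L)$, $d_k = \dim(\mathcal{B}_k \cap \mathcal{S}_L)$, the condition $d_i + d_k > |L|$ is what I need.

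Then I would plug in the bounds. Using $|J_i|, |J_k| \le n$ (or, more carefully, exploiting that $|J_i|$ cannot be much larger than $\frac{n}{2}$ anyway since $\dim(\mathcal{B}_i \cap \mathcal{S}_{J_i}) \le m_i = \frac{n}{2} - \epsilon_n$ forces no useful information beyond roughly $\frac{n}{2}$ coordinates — or simply bounding $|J_i \setminus L|, |J_k \setminus L|$ directly), one gets $d_i + d_k \ge n - 2cK\epsilon_n - (|J_i| - |L|) - (|J_k| - |L|)$. Setting this strictly greater than $|L|$ and rearranging yields a threshold of the form $|L| > |J_i| + |J_k| - n + 2cK\epsilon_n$; combined with the trivial bound $|J_i| + |J_k| \le |J_i \cup J_k| + |L| \le n + |L|$, any $|L|$ exceeding $2cK\epsilon_n$ (possibly with the constant $c$ absorbing the factor) forces a collision. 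Hence decodability at receiver $k$ demands $|J_i \cap J_k| \le 2cK\epsilon_n$, which is the claim.

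The main obstacle I anticipate is making the almost-sure nonzero-intersection step rigorous: two subspaces inside $\mathcal{S}_L$ with total dimension exceeding $|L|$ always intersect nontrivially as a pure linear-algebra fact, so the diagonal scaling by $\Lambda_{ik}$ and $\Lambda_{kk}$ actually cannot hurt here — the dimension count is deterministic once we know the scaled subspaces remain within the $|L|$-dimensional $\mathcal{S}_L$. So the subtlety is really just bookkeeping: correctly tracking how much dimension is lost when intersecting with $\mathcal{S}_L$ instead of $\mathcal{S}_{J_i}$, and choosing the constant $c$ (and the interpretation of ``decodability only if'') so that all the $\epsilon_n$ slack terms are absorbed cleanly. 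I would also need to confirm that $i$ being connected to receiver $k$ with $k \neq i$ indeed places $i$ in the interferer set $I_k$ as used in the decodability condition, which follows directly from the system model.
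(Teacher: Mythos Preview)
Your overall strategy matches the paper's: observe that diagonal scaling preserves sparse subspaces, so $\dim(\mathcal{S}_{J_i}\cap\mathsf{colspan}(\Lambda_{ik}B_i))\ge \frac{n}{2}-cK\epsilon_n$ and similarly for $k$, and then argue by a dimension count that a large overlap $|J_i\cap J_k|$ forces $\mathsf{colspan}(\Lambda_{ik}B_i)\cap\mathsf{colspan}(\Lambda_{kk}B_k)\neq\{0\}$, violating decodability. The paper's proof is extremely terse at precisely the step you try to spell out.

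However, your final bookkeeping step contains a genuine gap. You correctly derive the collision threshold $|L|>|J_i|+|J_k|-n+2cK\epsilon_n$, and then invoke the ``trivial bound'' $|J_i|+|J_k|\le n+|L|$ to conclude that $|L|>2cK\epsilon_n$ suffices. But substituting the trivial bound into the threshold only shows the right-hand side is at most $|L|+2cK\epsilon_n$; requiring $|L|$ to exceed that is $|L|>|L|+2cK\epsilon_n$, which is never true. Equivalently, after inclusion--exclusion your localized count inside $\mathcal{S}_L$ reduces to the condition $|J_i\cup J_k|<n-2cK\epsilon_n$, which says nothing about $|L|$. A concrete obstruction to the lemma \emph{as literally stated}: take $J_i=J_k=[n]$, $\mathcal{B}_i=\mathrm{span}(e_1,\dots,e_{n/2})$, $\mathcal{B}_k=\mathrm{span}(e_{n/2+1},\dots,e_n)$; the hypotheses hold, decodability holds, yet $|J_i\cap J_k|=n$.

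The missing ingredient is an \emph{upper} bound $|J_i|,|J_k|\le \frac{n}{2}+O(\epsilon_n)$. Your parenthetical justification---that $\dim(\mathcal{B}_i\cap\mathcal{S}_{J_i})\le m_i$ constrains $|J_i|$---is incorrect: that inequality bounds the intersection dimension, not $|J_i|$ (the example $J_i=[n]$ above satisfies it trivially). The needed upper bound is instead supplied by the context in which the lemma is applied: the sets come from the construction in Lemma~\ref{lem:halfdofcondition}, where~\eqref{eq:halfproofone} gives $\dim(\mathcal{S}_J\cap\mathcal{B}_1)+\dim(\mathcal{S}_J\cap\mathcal{B}_2)\ge |J|+\frac{n}{2}-5\epsilon_n$ with each summand at most $m_i=\frac{n}{2}-\epsilon_n$, forcing $|J|\le \frac{n}{2}+3\epsilon_n$. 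With that in hand, $|J_i|+|J_k|-n\le O(\epsilon_n)$ and your threshold argument goes through after absorbing constants into $c$. So the plan is right, but the justification you flagged as ``just bookkeeping'' is exactly where the argument needs an extra hypothesis you cannot extract from the lemma's stated assumptions alone.
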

\begin{proof}
Since $\dim\left(S_{J_i} \cap \mathcal{B}_i\right) \geq \frac{n}{2}-cK\epsilon_n$, then $\dim(S_{J_i} \cap \mathsf{colspan}(\Lambda_{ik}B_i)) \geq \frac{n}{2}-cK\epsilon_n$ almost surely. Similarly, if $\dim\left(S_{J_k} \cap \mathcal{B}_k\right) \geq \frac{n}{2}-cK\epsilon_n$, then $\dim(S_{J_k} \cap \mathsf{colspan}(\Lambda_{kk}B_k)) \geq \frac{n}{2}-cK\epsilon_n$ almost surely. If $|J_i \cap J_k| > 2cK\epsilon_n$, then $\dim(\mathsf{colspan}(\Lambda_{ik}B_i) \cap \mathsf{colspan}(\Lambda_{kk}B_k)) >0$, and hence, it follows that the dimension of the projection of the received signal at receiver $k$ on the complement of the interference subspace is less than $\dim(\mathcal{B}_k)$, i.e., $\mathsf{dim}\left(\mathsf{Proj}_{{\cal I}_{k}^c} \mathsf{colspan} \left(\Lambda_{kk} B_k\right)\right) < \frac{n}{2}-\epsilon_n$, and therefore, violating~\eqref{eq:decodability}.
\end{proof}

Now, assume that the reduced conflict graph $G$ is not bipartite, then its chromatic number is at least $3$. In this case, we know from Lemma~\ref{lem:halfdofcondition} and Lemma~\ref{lem:conflict} that there exist three users (say users $1$, $2$ and $3$) and three sets $J_i \subseteq [n], i\in\{1,2,3\}$ such that the following holds.
\begin{equation}\label{eq:convcondone}
|J_i| \geq \frac{n}{2} - cK\epsilon_n, \forall i\in\{1,2,3\},
\end{equation}
\begin{equation}\label{eq:convcondtwo}
|J_1 \cap J_2| \leq 2cK\epsilon_n, |J_2 \cap J_3| \leq 2cK\epsilon_n, |J_1 \cap J_3| \leq 2cK\epsilon_n.
\end{equation}
 It is easy to see that if $7cK\epsilon_n < \frac{n}{2}$, then one of the conditions in~\eqref{eq:convcondone} and~\eqref{eq:convcondtwo} is violated. It then follows that it cannot be the case that $\epsilon_n \rightarrow 0$, and hence, $\frac{1}{2}$ symmetric DoF cannot be achieved.

Conversely, if $G$ is bipartite, we show a linear coding scheme achieving half symmetric DoF. Consider a coding scheme over two time slots and suppose that each transmitter uses a point-to-point capacity achieving code and whenever it is activated in any of the two time slots, it transmits the selected codeword and otherwise it remains silent. The activation of transmitters is determined by the graph $G$ as follows. Let $P_1$ and $P_2$ be the two partite sets constituting $G$. If vertex $i$ has no outgoing edges, then transmitter $i$ is activated in both time slots. For any remaining user $j$ (corresponding to the nodes that have least one outgoing edge in $G$), if vertex $j$ is in $P_k$, $k\in\{1,2\}$, then transmitter $j$ is active in time slot $k$ and inactive in the other time slot. As an example, for the 6-user topology of Figure \ref{ex_gain}, the retransmission pattern can be written as follows,
\begin{align}\label{code}
\setlength{\arraycolsep}{4pt}
\begin{bmatrix}
B_1 & B_2 & B_3 & B_4 & B_5 & B_6 
\end{bmatrix}
=
\begin{bmatrix}
0 & 1 & 0 & 0 & 1 & 1\\
1 & 0 & 1 & 1 & 0 & 1
\end{bmatrix}.
\end{align}

Each column in~\eqref{code} corresponds to a user and each row corresponds to a time slot. For instance, transmitter 2 sends its codeword in time slot 1 and remains silent in time slot 2, whereas transmitter 6 repeats its codeword in both time slots.
%Also, on the receiver side, receiver 4 can only use its received signal at time slot 1, while receiver 3 needs to properly combine the signals it receives in both time slots in order to fully cancel the interference from transmitter 6.

We now show that successful decoding is possible for almost all realizations of the channel coefficients, and hence, $1$ DoF is achieved for each user over two time slots. For each receiver $i$ with more than two interfering links, transmitter $i$ is activated in a time slot where all interfering transmitters are silent, and hence, the successful decoding condition of~\eqref{eq:decodability} is guaranteed. In the example of Figure~\ref{ex_gain}, receivers $4$ and $5$ have two interfering links. Transmitters $2$ and $5$ are interfering at receiver $4$, and hence, in~\eqref{code}, transmitters $2$ and $5$ are silent in the second time slot where transmitter $4$ is active. Similarly, transmitters $3$ and $4$ are interfering at receiver $5$, and hence, both are silent in the first time slot where transmitter $5$ is active. For each receiver $i$ with one interfering link, it is either the case that transmitter $i$ is active in a time slot for which the interfering transmitter is silent (as is the case for receiver $6$ in the example), or it is the case that at least one of transmitter $i$ and the interfering transmitter is active in both time slots (as is the case for receivers $1$, $2$ and $3$ in the example); in both cases, the condition in~\eqref{eq:decodability} is satisfied. This completes the proof.
\end{proof}

\subsection{Characterizing the Linear Symmetric DoF for Network Topologies with Exclusive Alignment Sets}\label{sec:nonoverlapping}

In this section, we characterize the linear symmetric degrees of freedom for a broader class of topologies; namely the topologies with ``exclusive alignment sets'', to be defined shortly. This generalizes the result that we proved in the previous section for half linear symmetric DoF.

Consider a coding scheme achieving a linear symmetric DoF $d$ over $n$ time slots. At each receiver $j$, the decodability condition~\eqref{eq:decodability} implies that
$\text{dim} \left({\cal I}_j\right) \leq n\left(1-d\right)$.
Using Theorem~\ref{thm:main}, we obtain the following equivalent condition on the design of beamforming matrices corresponding to interfering signals,
\begin{gather}
\forall Y_i\subseteq[nd], i \in I_j \emph{ s.t. }\sum_{i \in I_j} |Y_i|=\min\left(nd|I_j|,n\right)\nonumber,\\
\exists J \subseteq [n]: \sum_{i \in I_j} \dim(\mathcal{S}_{J} \cap \mathcal{B}_{i,*,Y_i}) \geq |J|+\tau,\label{eq:finalalignment}
\end{gather}
where $\tau=n\left(3d-1\right)$. Using the condition in~\eqref{eq:finalalignment} to reach a converse for the achievable linear symmetric DoF of arbitrary network topologies is a difficult problem, as it is not clear what the required number of time slots $n$ is to achieve $d_{\text{sym}}$. Further, for each value $n$, reducing the complexity of the search for the optimal design of the beamforming matrices by converting it to a problem of combinatorial optimization does not seem straightforward. However, under the restriction to a certain class of topologies, the task becomes easier as it reduces to a problem independent of the value of $n$, and can be described directly in terms of the interference conflict pattern between network users. We introduce the properties of the considered network topologies through the interference sets $I_j, j \in [K]$.
 We consider topologies where the following properties hold,
\begin{itemize}
\item {\bf (P1) Maximum Degree:} For all $j \in [K]$, $|I_j| \leq 2$,
\item {\bf (P2) Exclusive Alignment Sets:} For all $j,k \in [K]$ such that $\max\left(|I_j|,|I_k|\right)=2$, $I_j \cap I_k = \phi$.  
\end{itemize}
The property (P1) simplifies the problem because for any network with at least one interfering link, $d_{\text{sym}} \leq \frac{n}{2}$. (P1) then implies that $nd|I_j| \leq n$. Therefore,~\eqref{eq:finalalignment} reduces to,
\begin{align}\label{eq:reducedalignment}
\exists J \subseteq [n]: \sum_{i \in I_j} \dim(\mathcal{S}_{J} \cap \mathcal{B}_{i}) \geq |J|+\tau.
\end{align}

Moreover, (P2) simplifies the problem as in this case we can conclude from~\eqref{eq:reducedalignment} that there is no loss in generality in assuming that $\mathcal{S}_J \subseteq \mathcal{B}_i, \forall i \in I_j$ (see Lemma~\ref{lem:non_overlapping} below). The problem then becomes that of finding sparse subspaces for each interference set of size $2$ such that the subspaces corresponding to conflicting interference sets do not overlap. This is captured through the chromatic number of a reduced conflict graph that captures only conflicts between interference sets.

Having the aforementioned properties, we can now state the result on the linear symmetric DoF for the considered class of topologies. We call any network topology with at least one interference link an interference network topology.
\begin{thm}\label{thm:dsym}
For any interference network topology satisfying (P1) and (P2), the linear symmetric DoF is given by,
\begin{equation}\label{eq:ldof}
\emph{LDoF}_{\text{sym}} = \min\left(\frac{1}{2},\frac{\chi(G)+1}{3\chi(G)}\right),
\end{equation}
where $G$ is the reduced conflict graph of the topology (as defined in Definition \ref{redgraph}) and $\chi(.)$ denotes the chromatic number.
\end{thm}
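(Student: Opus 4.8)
The plan is to prove the two inequalities $\text{LDoF}_{\text{sym}}\le\min(\tfrac12,\tfrac{\chi(G)+1}{3\chi(G)})$ and $\text{LDoF}_{\text{sym}}\ge\min(\tfrac12,\tfrac{\chi(G)+1}{3\chi(G)})$ separately, and in both directions to first extract the combinatorial skeleton that (P1)--(P2) impose on the reduced conflict graph $G$. The key structural fact is: by (P1)--(P2) each transmitter lies in at most one interference set of size $2$, and if $i\in I_k$ with $|I_k|=2$ then $\text{T}_i$ interferes \emph{only} at $\text{D}_k$; hence an edge $(i,j)$ of $G$ can occur only when $\text{D}_j$ carries a size-$2$ interference set and $i\in I_j$. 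Therefore $G$ is a union of ``cherries''---one two-edge path with center $k$ and leaves $a,b$ for each size-$2$ set $I_k=\{a,b\}$---and, again by (P2), two cherries can meet only in a vertex that is a leaf of one and the center of the other, the cherry leaf-sets being pairwise disjoint. (A by-product is that $G$ has no $K_4$ and maximum degree $3$, so $\chi(G)\le 3$ and the right-hand side of~\eqref{eq:ldof} only ever equals $\tfrac12$ or $\tfrac49$; I will not rely on this, but it explains the form of the answer.)

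For the converse, $\text{LDoF}_{\text{sym}}\le\tfrac12$ is immediate from the presence of an interference link, so the real content is $\text{LDoF}_{\text{sym}}\le\tfrac{\chi(G)+1}{3\chi(G)}$, which is binding only when $\chi(G)=3$. Assume $d$ is achievable over $n$ slots. Applying Theorem~\ref{thm:main} at each receiver carrying a size-$2$ interference set gives~\eqref{eq:reducedalignment}; then Lemma~\ref{lem:non_overlapping} together with (P2) lets us assume without loss of generality that each size-$2$ set $I_k$ comes with a support $J_k\subseteq[n]$ satisfying $\mathcal{S}_{J_k}\subseteq\mathcal{B}_i$ for both $i\in I_k$ and $|J_k|\ge\tau=n(3d-1)$. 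Next, applying Lemma~\ref{lem:conflict} (in the form valid for general $d$) at a vertex shared by two cherries: there the desired subspace contains one aligned subspace $\mathcal{S}_{J_k}$ while the interference subspace contains the other $\mathcal{S}_{J_{k'}}$, so~\eqref{eq:decodability} forces $|J_k\cap J_{k'}|=o(n)$. Thus the supports of $G$-adjacent size-$2$ sets are essentially disjoint. If $\chi(G)=3$ then $G$ has an odd cycle, and tracing it through the cherry structure produces a cyclic family of size-$2$ sets with pairwise-near-disjoint consecutive supports; such a cyclic (odd) system cannot be packed into $[n]$ with fewer than three mutually disjoint blocks, giving $3\tau\le n+o(n)$, i.e. $d\le\tfrac49+o(1)=\tfrac{\chi(G)+1}{3\chi(G)}+o(1)$; letting $n\to\infty$ completes the converse.

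For achievability I would exhibit a linear scheme over $n$ time slots with $n$ a multiple of $3\chi(G)$ and $m_i=\frac{n(\chi(G)+1)}{3\chi(G)}$ for all $i$ (the case $\chi(G)=1$---no size-$2$ sets---reducing to pure scheduling achieving $\tfrac12$). Fix a proper $\chi(G)$-coloring of $G$, partition $[n]$ into $\chi(G)$ equal blocks, and color the cherries so that any two that meet receive distinct colors. Give the two transmitters of each size-$2$ set $I_k$ a common $\tau$-dimensional sparse subspace supported inside the block of its color, and fill the remaining $m_i-\tau$ columns of every $B_i$ with generic entries, activating the transmitters lying in no size-$2$ set according to the coloring as in the achievability half of Theorem~\ref{thm:halfdof}. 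Running the converse computation backwards, at each receiver the interference subspace then has dimension at most $n-m_i$ and, by block-disjointness and genericity of the non-aligned columns, is in general position relative to the desired subspace, so~\eqref{eq:decodability} holds almost surely and $d=\tfrac{\chi(G)+1}{3\chi(G)}$ is attained.

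The step I expect to be the main obstacle is pinning down the exact correspondence between the graph invariant $\chi(G)$ and the packing/coloring constraint on the aligned subspaces: for the converse one must show that the cherry structure forced by (P1)--(P2) makes the chromatic number of the ``interference-set conflict graph'' equal to $\chi(G)$---so that $\chi(G)=3$ genuinely yields $3\tau\le n+o(n)$ and nothing weaker, and no bound below $\tfrac49$ can be extracted---and for achievability one must show $\chi(G)$ colors always suffice to separate every conflicting alignment simultaneously. Propagating the $\epsilon_n$ slack through~\eqref{eq:reducedalignment} and the routine case analysis over receivers with interference sets of size $0$, $1$, or $2$ when verifying~\eqref{eq:decodability} are mechanical but must be carried out carefully.
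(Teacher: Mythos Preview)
Your overall strategy matches the paper's: apply Theorem~\ref{thm:main} at each size-$2$ receiver to get~\eqref{eq:reducedalignment}, pass through Lemma~\ref{lem:non_overlapping} to reduce to sparse subspaces $\mathcal{S}_{J_r}$ of size $\tau$ contained in $\mathcal{B}_i$ for $i\in I_r$, deduce a disjointness constraint on the $J_r$'s from decodability, and convert that into a coloring bound. The achievability outline is also essentially the paper's.

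There are two substantive differences. First, where you invoke a ``general-$d$'' version of Lemma~\ref{lem:conflict} to obtain $|J_k\cap J_{k'}|=o(n)$, the paper instead proves and uses Lemma~\ref{lem:minimal}: for a \emph{minimal} $J_r$ satisfying~\eqref{eq:reducedalignment} one has $\mathcal{S}_{J_r}\overset{a.s.}{\subseteq}\mathcal{I}_r$, so decodability gives the \emph{exact} constraint $\dim(\mathcal{S}_{J_r}\cap\mathcal{B}_r)=0$, hence $J_r\cap J_d=\emptyset$ whenever $r\in I_d$ with $|I_r|=|I_d|=2$. This removes the $\epsilon_n$ bookkeeping entirely and yields condition~\eqref{eq:newdecode} with no slack; it is worth adopting.

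Second, and more seriously, your packing step does not establish what you need. You assert that an odd cycle of near-disjoint consecutive supports forces $3\tau\le n+o(n)$, but a $(2\ell+1)$-cycle with consecutive disjointness admits $\tau$ as large as $n\ell/(2\ell+1)$ (the fractional chromatic number of $C_{2\ell+1}$ is $2+\tfrac{1}{\ell}$), which for $\ell\ge 2$ exceeds $n/3$. Concretely, under (P1)--(P2) one can build a topology whose reduced conflict graph is a $5$-cycle with pendants (take $I_{k_i}=\{k_{i-1},x_i\}$, indices mod $5$, all $x_i$ fresh); then $\chi(G)=3$, there is no triangle, and your cycle argument only yields $\tau\le 2n/5$, i.e.\ $d\le 7/15$, not $4/9$. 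You flag exactly this issue in your final paragraph, but the body of the argument as written does not overcome it. The paper handles the passage from~\eqref{eq:newdecode} to $\tau\le n/\chi(G)$ by asserting the equivalence directly; if you want a self-contained argument you must either show that (P1)--(P2) force the interference-set conflict graph to have fractional chromatic number equal to $\chi(G)$, or find additional constraints beyond consecutive disjointness.
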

%\begin{remk}
%For any network topology satisfying (P1), a linear symmetric DoF of $\frac{1}{3}$ is achievable. The converse holds for the case where there is an internal conflict within an interference set. In other words, if transmitter $i$ is interfering at receiver $j$ and both transmitters $i$ and $j$ are interfering at a receiver $k$. The property (P2) implies that there is no internal conflicts. It hence follows from Theorem~\ref{thm:dsym} that (P2) also implies that $d_{\text{sym}}$ is strictly greater than $\frac{1}{3}$ and approaches it asymptotically as the chromatic number of the reduced conflict graph increases.
\begin{remk}
In~\cite{topology}, several examples of topologies have been discussed for which the converse for $d_{\text{sym}}$ has remained open. Using Theorem \ref{thm:dsym} , we can now characterize the linear symmetric DoF of all those topologies which satisfy (P1) and (P2). In Figure~\ref{fig:example}, we plot one example where the chromatic number of the reduced conflict graph is 3. The symmetric DoF of $\frac{4}{9}$ is achievable through the structured repetition coding scheme introduced in~\cite{topology}. Here, the converse follows from Theorem~\ref{thm:dsym}.
\end{remk}

\begin{figure}[h]
\centering
\begin{subfigure}{0.3\textwidth}
\centering
\includegraphics[trim = 1.48in 2.3in 6.3in 2.5in, clip,width=.8\textwidth]{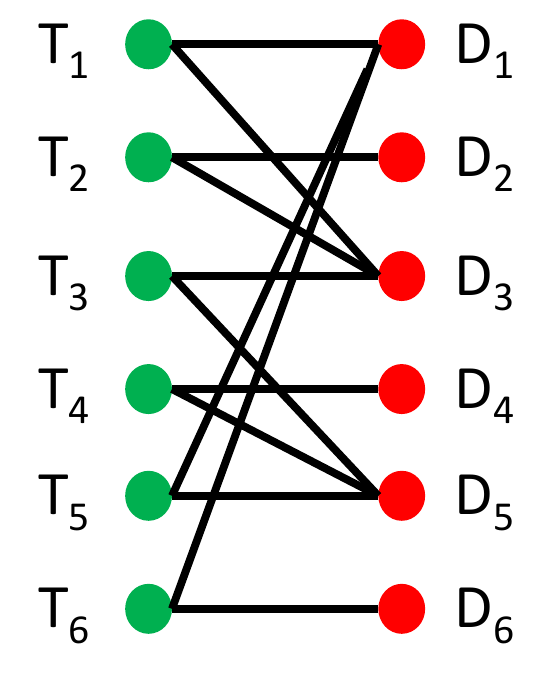}
\caption{}
\label{fig:example_topology}
\end{subfigure}
~
\begin{subfigure}{0.3\textwidth}
\centering
\includegraphics[trim = 5.6in 2.2in 1.0in 2.1in, clip,width=\textwidth]{4_9_nonoverlapping}
\caption{}
\label{conf_graph_4_9}
\end{subfigure}
\caption{(a) A topology in which the symmetric DoF $d_{\text{sym}}=\frac{4}{9}$  and (b) the corresponding reduced conflict graph with chromatic number $\chi(G)=3$.}
\label{fig:example}
\end{figure}
\begin{proof}[Proof of Theorem \ref{thm:dsym}] 
We know from~\cite{jafar} that for any interference network topology, $d_{\text{sym}} \leq \frac{1}{2}$. For the case where the reduced conflict graph has no edges, i.e., $\chi(G) = 1$, $d_{\text{sym}} = \frac{1}{2}$ is achievable by having each of the column vectors in each matrix $B_i, i\in[K]$ to have no zero entries. Hence, we consider the case where $\chi(G) \geq 2$ and show that $\text{LDoF}_{\text{sym}}=\frac{\chi(G)+1}{3\chi(G)}$ in this case.
We first show that for any receiver with two or more interfering signals, the sparse subspace $\mathcal{S}_J$ of~\eqref{eq:result} is \emph{fully occupied} by the interference, almost surely. More precisely, we prove the following corollary of the equivalent condition of Theorem~\ref{thm:main}.
\begin{lem}\label{lem:minimal}
If~\eqref{eq:result} holds, then for a minimal set $J$ satisfying~\eqref{eq:result}, $\mathcal{S}_J\overset{a.s.}{\subseteq} \mathcal{B}_D$. More precisely,
\begin{gather}
\exists Y_i\subseteq[m_i], i \in [K] \text{ s.t. }\sum_{i=1}^K |Y_i|=\min \left(\sum_{i=1}^K m_i,n\right),\nonumber\\ J\subseteq [n]: \sum_{i=1}^K \dim(\mathcal{S}_{J} \cap \mathcal{B}_{i,*,Y_i}) \geq |J|+x, x \geq \tau, \nonumber\\
\bigwedge \nexists L \subset J: \sum_{i=1}^K \dim(\mathcal{S}_{L} \cap \mathcal{B}_{i,*,Y_i}) \geq |L|+x, \nonumber\\ \Rightarrow \mathcal{S}_J\overset{a.s.}{\subseteq}\mathsf{colspan}\left(\left[\Lambda_1 B_1 ~ \ldots ~ \Lambda_K B_K\right]\right).\label{eq:minimal}
\end{gather}
\begin{proof}
Let $J^*$ be a set satisfying the condition in~\eqref{eq:minimal}. For each $i\in[K]$, let $c_1^{(i)},\ldots,c_{n_i}^{(i)}$ be $n_i$ vectors that form a basis for $\mathcal{S}_{J^*} \cap \mathcal{B}_{i,*,Y_i}$, where $\sum_{i=1}^K n_i = |J^*|+x$ and let $C_i=\left\{c_1^{(i)},\ldots,c_{n_i}^{(i)}\right\}$. Let $C=\{C_1,\ldots,C_K\}$ be the multiset consisting of the elements of $C_i, i\in[K]$ and let $G=(J^* \cup C, E)$ be the bipartite graph whose left partite set consists of vertices corresponding to elements in $J^*$ and right partite consists of vertices corresponding to the elements in $C$, and $\forall i\in J, c \in C, (i,c) \Leftrightarrow c_i \neq 0$. 
Since $\nexists L \subseteq J: \sum_{i=1}^K \dim(\mathcal{S}_{L} \cap \mathcal{B}_{i,*,Y_i}) \geq |L|+x+1$, we know that for any subset of vertices $I \subseteq C$, the neighboring set $N(I)$ satisfies the condition $|N(I)| \geq |I|-x$. It follows from Theorem~\ref{thm:hull} that there is a matching in $G$ of size $|C|-x$. Also, since $|J^*|=|C|-x$, we know that there is a matching in $G$ covering all elements of the left partite set. For each $i \in [K]$, let $c_1^{*(i)},\ldots,c_{n^*_i}^{*(i)}$ be the elements of $\{c_1^{(i)},\ldots,c_{n_i}^{(i)}\}$ in the matching in the right partite set $C$, where $\sum_{i=1}^K n^*_i=|J^*|$. 
For each $i\in[K]$, let $C_i^*=\left\{c_1^{*(i)},\ldots,c_{n^*_i}^{*(i)}\right\}$. Let $C^*$ be the multiset $C^*=\{C_1^*,\ldots,C_K^*\}$, and consider the bipartite graph $G^*=(J^* \cup C^*, E^*)$, where $\forall i\in J^*, c\in C^*, (i,b) \in E^* \Leftrightarrow c_i \neq 0$. Since $G^*$ has a perfect matching, it follows that $\nexists J \subseteq J^*: \sum_{i=1}^K \dim(\mathcal{S}_{J} \cap \mathcal{B}_{i,*,C_i^*}) > |J|$, and hence, from~\eqref{eq:result} we know that $[\Lambda_1 B_{1,*,C_1^*} ~ \ldots ~ \ldots ~ \Lambda_K B_{K,*,C_K^*}]$ is full rank almost surely. Now, since $\mathsf{colspan}\left(\left[\Lambda_1 B_{1,*,C_1^*} ~ \ldots ~ \Lambda_K B_{K,*,C_K^*}\right]\right)$ has $|J^*|$ linearly independent column vectors almost surely, it follows that $\mathcal{S}_{J^*} \overset{a.s.}{\subseteq} \mathsf{colspan}\left(\left[\Lambda_1 B_{1,*,C_1^*} ~ \ldots ~ \Lambda_K B_{K,*,C_K^*}\right]\right)$, and hence, $\mathcal{S}_{J^*} \overset{a.s.}{\subseteq} \mathcal{B}_D$.
\end{proof}
\end{lem}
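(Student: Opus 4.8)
The plan is to apply Theorem~\ref{thm:main} itself, to a smaller ensemble built from the sparse subspaces appearing in~\eqref{eq:minimal}. Fix the $Y_i$'s and a minimal $J^\ast$ as in the hypothesis, with $\sum_{i=1}^K \dim(\mathcal{S}_{J^\ast}\cap\mathcal{B}_{i,*,Y_i}) \ge |J^\ast|+x$ and $x\ge\tau\ge 1$. Set $W_i:=\mathcal{S}_{J^\ast}\cap\mathcal{B}_{i,*,Y_i}$ and $n_i:=\dim W_i$, and let $B_i'$ be an $n\times n_i$ full-column-rank matrix whose columns form a basis of $W_i$; every column of $B_i'$, and hence of $\Lambda_i B_i'$, is then supported on the rows indexed by $J^\ast$. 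Write $x':=\sum_{i=1}^K n_i-|J^\ast|$, so $x'\ge x\ge 1$, and record the size bound $|J^\ast|+x'=\sum_{i=1}^K n_i=\sum_{i=1}^K\dim(\mathcal{S}_{J^\ast}\cap\mathcal{B}_{i,*,Y_i})\le\sum_{i=1}^K|Y_i|=R\le n$, using that each $B_{i,*,Y_i}$ has full column rank.

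First I would show that the ensemble $B_D':=[\Lambda_1 B_1'~\cdots~\Lambda_K B_K']$ (with the same $\Lambda_i$'s) does not lose rank by $x'+1$. Apply Theorem~\ref{thm:main} to $B_1',\dots,B_K'$, whose maximal rank is $\min(\sum_i n_i,n)=|J^\ast|+x'$, with loss parameter $x'+1$. Since $\sum_i n_i$ already equals this maximal rank, the only admissible choice of column subsets in condition~\eqref{eq:result} for this instance is $Y_i''=[n_i]$ for all $i$, so~\eqref{eq:result} collapses to the single statement $\exists J''\subseteq[n]:\ \sum_{i=1}^K\dim(\mathcal{S}_{J''}\cap W_i)\ge |J''|+x'+1$. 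I claim this fails. Since $W_i\subseteq\mathcal{S}_{J^\ast}$, for every $J''$ we have $\mathcal{S}_{J''}\cap W_i=\mathcal{S}_{J''\cap J^\ast}\cap\mathcal{B}_{i,*,Y_i}$; if $J''\cap J^\ast=J^\ast$ the sum equals $\sum_i n_i=|J^\ast|+x'\le|J''|+x'$, while if $J''\cap J^\ast\subsetneq J^\ast$ the minimality of $J^\ast$ (with respect to $x$, together with $x\le x'$) gives $\sum_{i=1}^K\dim(\mathcal{S}_{J''\cap J^\ast}\cap\mathcal{B}_{i,*,Y_i})<|J''\cap J^\ast|+x\le|J''|+x'$. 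Hence~\eqref{eq:result} fails for this instance, so by (the contrapositive of) Theorem~\ref{thm:main}, $B_D'$ does not lose rank by $x'+1$; equivalently, as the almost-sure rank of a polynomial matrix is a fixed constant, $\text{rank}(B_D')$ is almost surely at least $(|J^\ast|+x')-x'=|J^\ast|$.

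To conclude, note that all columns of $B_D'$ lie in $\mathcal{S}_{J^\ast}$, so $\text{rank}(B_D')\le\dim\mathcal{S}_{J^\ast}=|J^\ast|$; together with the previous bound, $\mathsf{colspan}(B_D')\overset{a.s.}{=}\mathcal{S}_{J^\ast}$. Moreover each column of $B_i'$ lies in $\mathcal{B}_{i,*,Y_i}\subseteq\mathcal{B}_i$, hence each column of $\Lambda_i B_i'$ is a linear combination of columns of $\Lambda_i B_i$, so $\mathsf{colspan}(B_D')\subseteq\mathsf{colspan}([\Lambda_1 B_1~\cdots~\Lambda_K B_K])=\mathcal{B}_D$. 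Chaining the two relations gives $\mathcal{S}_{J^\ast}\overset{a.s.}{\subseteq}\mathcal{B}_D$, which is~\eqref{eq:minimal}.

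The step I expect to be the crux is the middle one: verifying that the reduced ensemble does not lose rank, and in particular realizing that the minimality of $J^\ast$ is exactly the combinatorial certificate needed — it guarantees that the sparse subspaces $W_i$ are ``spread out'' enough inside $\mathcal{S}_{J^\ast}$ for the diagonally scaled union to fill it almost surely. Secondary care is required with the bookkeeping between the given parameter $x$ and the maximal parameter $x'$, with the bound $|J^\ast|+x'\le n$ (so that Theorem~\ref{thm:main} applies to $B_1',\dots,B_K'$ with $R=\sum_i n_i$), and with the routine fact that negating ``$\overset{a.s.}{\le}$'' yields ``$\overset{a.s.}{\ge}$''. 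One could instead run the Step~5 bipartite-graph/Hall's-theorem argument directly on bases of the $W_i$'s, but that route ultimately rests on the same minimality input.
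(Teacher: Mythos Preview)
Your proof is correct. Both your argument and the paper's rest on the same core idea: build the reduced ensemble from bases of $W_i=\mathcal{S}_{J^\ast}\cap\mathcal{B}_{i,*,Y_i}$, use the minimality of $J^\ast$ as the combinatorial certificate, and invoke Theorem~\ref{thm:main} to conclude full rank inside $\mathcal{S}_{J^\ast}$. The difference is in how that invocation is set up. The paper first runs the Hall-type argument (Theorem~\ref{thm:hull}) on the bipartite graph between $J^\ast$ and the basis vectors to extract a subfamily of exactly $|J^\ast|$ columns admitting a perfect matching with $J^\ast$, and then applies Theorem~\ref{thm:main} with rank-loss parameter $1$ to that subfamily. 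You bypass the matching extraction entirely and apply Theorem~\ref{thm:main} directly to the full family of $\sum_i n_i$ basis vectors with the larger rank-loss parameter $x'+1$; the minimality of $J^\ast$ then disposes of condition~\eqref{eq:result} in one stroke via the identity $\mathcal{S}_{J''}\cap W_i=\mathcal{S}_{J''\cap J^\ast}\cap\mathcal{B}_{i,*,Y_i}$. Your route is a little cleaner---it absorbs the Hall step into Theorem~\ref{thm:main} itself (which, after all, already encodes a Hall-type argument in its proof)---at the minor cost of the bookkeeping between $x$ and $x'$ and the explicit appeal to the almost-sure constancy of the rank of a polynomial matrix. As you correctly observe in your final paragraph, the two routes are interchangeable and draw on the same minimality input.
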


From the condition in~\eqref{eq:result}, Lemma~\ref{lem:minimal} and the decodability condition~\eqref{eq:decodability}, we obtain the following condition,
\begin{gather}
\forall r \in [K]: I_r=\{r_1,r_2\}, \exists J_r \subseteq [n]: \nonumber\\\dim(S_{J_r} \cap \mathcal{B}_{r1}) + \dim(S_{J_r} \cap \mathcal{B}_{r2}) \geq |J_r|+\tau, \nonumber\\\dim(S_{J_r} \cap \mathcal{B}_r)=0.\label{eq:alignment}
\end{gather}

We now use the following lemma to restrict our attention to a simpler condition for the considered class of topologies. The proof of the lemma is in Appendix~\ref{app:non_overlapping}.
\begin{lem}\label{lem:non_overlapping}
For the case where alignment sets are exclusive, if there exist $B_1,\ldots,B_K$ such that \eqref{eq:alignment} holds, then there exist $B_1,\ldots,B_K$ such that the following condition holds,
\begin{gather}
\forall r \in [K]: I_r=\{r_1,r_2\}, \exists J_r \subset [n]: \nonumber\\
|J_r|=\tau, \mathcal{S}_{J_r} \subseteq \mathcal{B}_{r1} \cap \mathcal{B}_{r2}, \dim(\mathcal{S}_{J_r} \cap \mathcal{B}_r)=0.\label{eq:newalignment}
\end{gather}
\end{lem}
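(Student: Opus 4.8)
The plan is to prove Lemma~\ref{lem:non_overlapping} by a constructive modification argument: starting from beamforming matrices $B_1,\ldots,B_K$ satisfying~\eqref{eq:alignment}, I would show that one can ``inflate'' the sparse-subspace intersections into genuine containments without violating any of the decodability constraints, crucially using property (P2) to guarantee that the modifications performed for different interference sets do not interfere with one another. So first I would fix, for each $r\in[K]$ with $I_r=\{r_1,r_2\}$, a set $J_r$ and the two subspaces $\mathcal{S}_{J_r}\cap\mathcal{B}_{r_1}$, $\mathcal{S}_{J_r}\cap\mathcal{B}_{r_2}$ guaranteed by~\eqref{eq:alignment}. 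By the same minimality reasoning used in Lemma~\ref{lem:minimal} (passing to a minimal $J_r$), I may assume $|J_r|=\tau$ and that $\dim(\mathcal{S}_{J_r}\cap\mathcal{B}_{r_1})+\dim(\mathcal{S}_{J_r}\cap\mathcal{B}_{r_2})$ is exactly $|J_r|+\tau=2\tau$, which forces $\dim(\mathcal{S}_{J_r}\cap\mathcal{B}_{r_i})=\tau=|J_r|$ for $i=1,2$ (each term is at most $|J_r|$), i.e.\ $\mathcal{S}_{J_r}\subseteq\mathcal{B}_{r_1}$ and $\mathcal{S}_{J_r}\subseteq\mathcal{B}_{r_2}$ already hold for each such $r$ after the minimality reduction. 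The only genuinely new content is then the claim that these can be achieved \emph{simultaneously} across all $r$ together with the desired-signal condition $\dim(\mathcal{S}_{J_r}\cap\mathcal{B}_r)=0$ and full column rank of every $B_i$.

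The key step is to argue that the choice of $B_i$ for transmitter $i$ is ``local'': transmitter $i$ appears in $I_r$ for at most one $r$ with $|I_r|=2$ by (P2) (if $i\in I_r\cap I_s$ with $|I_r|=|I_s|=2$, then $I_r\cap I_s\neq\phi$, contradicting (P2) unless $r=s$), so the only alignment constraint that touches $B_i$ is the one coming from that unique $r$. Hence I would construct each $B_i$ independently: choose $B_i$ to be a full-column-rank $n\times(nd)$ matrix whose column span contains $\mathcal{S}_{J_r}$ (for the unique relevant $r$, or arbitrarily generic if no such $r$ exists), where $J_r$ is a fixed $\tau$-subset of $[n]$ — for instance one can coordinate the choices of the $J_r$'s via the colouring of the reduced conflict graph so that conflicting interference sets get disjoint (or appropriately-related) $J_r$'s, which is exactly where $\chi(G)$ will enter in the proof of Theorem~\ref{thm:dsym}. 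Since $\tau=n(3d-1)\le nd$ when $d\le 1/2$, such a $B_i$ exists; completing $\mathcal{S}_{J_r}$ to a basis of $\mathcal{B}_i$ with generic extra columns keeps $B_i$ full rank and, because the extra columns are generic and $J_r\neq[n]$, keeps $\dim(\mathcal{S}_{J_r}\cap\mathcal{B}_r)=0$ at the desired receiver (the self-signal precoder has no sparse structure imposed, so generically its span meets no proper coordinate subspace of matching dimension beyond what is forced).

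The main obstacle I anticipate is bookkeeping the interaction between the containment requirement $\mathcal{S}_{J_r}\subseteq\mathcal{B}_{r_1}\cap\mathcal{B}_{r_2}$ and the non-containment requirement $\dim(\mathcal{S}_{J_r}\cap\mathcal{B}_r)=0$ at the \emph{same} physical transmitter, since a transmitter $i$ is both a desired transmitter for its own receiver $i$ and an interferer for the receivers in $I_j^{-1}$. The resolution is that these two roles constrain \emph{different} matrices: $\mathcal{B}_i$ must contain the sparse subspace associated with the receiver(s) it interferes at, while the non-containment condition at receiver $i$ is a condition on $\mathcal{B}_i$ relative to the sparse subspace $J_i$ chosen for receiver $i$'s own interference set — and by genericity of the remaining $nd-\tau$ columns of $B_i$ (which are not pinned down by any alignment constraint) one can always arrange $\mathcal{B}_i\cap\mathcal{S}_{J_i}=\{0\}$ as long as $nd-\tau\le n-\tau$, i.e.\ $d\le 1$, which always holds. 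Thus the proof reduces to: (i) the minimality reduction giving exact containments per alignment set, (ii) the (P2)-based observation that each transmitter is pinned by at most one alignment set, and (iii) a dimension count showing a generic full-rank completion exists satisfying all remaining (non-)containment constraints; step (iii) is the routine but slightly delicate part that must be executed carefully to keep every $B_i$ full column rank while hitting all $\tau$-dimensional containments exactly.
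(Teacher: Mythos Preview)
Your minimality reduction contains a genuine error: passing to a minimal $J_r$ satisfying
\[
\dim(\mathcal{S}_{J_r}\cap\mathcal{B}_{r_1})+\dim(\mathcal{S}_{J_r}\cap\mathcal{B}_{r_2})\geq |J_r|+\tau
\]
does \emph{not} force $|J_r|=\tau$. Minimality only guarantees that removing any single coordinate drops the left-hand side by $2$ (so the excess drops from $\tau$ to $\tau-1$), which says nothing about the size of $J_r$ itself. Concretely, take $n=10$, $nd=4$, $\tau=2$ and set
\[
\mathcal{B}_{r_1}=\mathrm{span}(e_1+e_2,\,e_3+e_4,\,e_1+e_3,\,e_5),\qquad
\mathcal{B}_{r_2}=\mathrm{span}(e_1+e_2,\,e_3+e_4,\,e_2+e_4,\,e_6).
\]
Then for $J=\{1,2,3,4\}$ each intersection has dimension $3$, so the sum is $6=|J|+\tau$, yet one checks that every proper subset $L\subsetneq J$ gives a sum of at most $|L|+1$. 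Thus $J$ is minimal with $|J|=4>2=\tau$, and neither $\mathcal{S}_J\subseteq\mathcal{B}_{r_1}$ nor $\mathcal{S}_J\subseteq\mathcal{B}_{r_2}$ holds. Your claim that ``$\mathcal{S}_{J_r}\subseteq\mathcal{B}_{r_1}$ and $\mathcal{S}_{J_r}\subseteq\mathcal{B}_{r_2}$ already hold after the minimality reduction'' is therefore false, and the rest of your outline (which treats the containments as given and only worries about simultaneity) collapses.

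The paper does not try to extract containment from the original matrices at all. Instead it \emph{modifies} $B_{r_1},B_{r_2}$: it first restricts attention to $J'_r=J_r\setminus(J_{r_1}\cup J_{r_2})$, using the decodability clauses $\dim(\mathcal{S}_{J_{r_i}}\cap\mathcal{B}_{r_i})=0$ from~\eqref{eq:alignment} to argue the inequality survives on $J'_r$ and hence $|J'_r|\geq\tau$; then it picks any $J_r^{(\mathrm{new})}\subseteq J'_r$ with $|J_r^{(\mathrm{new})}|=\tau$ and \emph{replaces} the columns of $B_{r_i}$ that spanned $\mathcal{S}'_r\cap\mathcal{B}_{r_i}$ by new columns in $\mathcal{S}'_r$, $\tau$ of which form a basis of $\mathcal{S}_{J_r^{(\mathrm{new})}}$. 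This column-replacement is what manufactures the exact containment $\mathcal{S}_{J_r^{(\mathrm{new})}}\subseteq\mathcal{B}_{r_i}^{(\mathrm{new})}$; the choice $J'_r\cap(J_{r_1}\cup J_{r_2})=\phi$ together with (P2) is what keeps the decodability condition $\dim(\mathcal{S}_{J_r^{(\mathrm{new})}}\cap\mathcal{B}_r^{(\mathrm{new})})=0$ intact after the modification. Your observation that (P2) makes each transmitter appear in at most one alignment set is correct and is exactly why the replacements at different $r$ do not conflict, but that observation alone does not substitute for the missing column-replacement step. Your second paragraph, where you propose to build the $B_i$ from scratch and ``coordinate the choices of the $J_r$'s via the colouring of the reduced conflict graph'', is the achievability construction, not a proof of the lemma: it presupposes $\tau\leq n/\chi(G)$, which is precisely what the lemma is being used to \emph{derive} in the converse.
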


We now complete the proof by arguing that there exists a design of the beamforming matrices that satisfies~\eqref{eq:newalignment} if and only if $\tau \leq \frac{n}{\chi(G)}$. This follows directly by observing that the decodability condition in~\eqref{eq:newalignment} $\left(\dim(\mathcal{S}_{J_r} \cap \mathcal{B}_r)=0\right)$ is satisfiable if and only if there is no overlap between sparse subspaces corresponding to conflicting interference sets; more precisely,
\begin{gather}
r,d \in [K]: r\in I_d, |I_r|=|I_d|=2 \nonumber\\
\Rightarrow J_r \cap J_d = \phi.\label{eq:newdecode}
\end{gather}
\end{proof}
In order to prove the converse part of Theorem~\ref{thm:dsym}, we showed that the property (P2) allows us to assume that for the optimal coding scheme, whenever the interference alignment condition in~\eqref{eq:reducedalignment} is satisfied for an alignment set, it is the case that $|J|=\tau$ and $S_J \subseteq {\cal B}_i, \forall i\in I_j$. We end this section with two remarks on the generalization of the converse for arbitrary network topologies; in particular, topologies that do not satisfy the property (P2).

\begin{remk}
If the exclusive alignment set property (P2) is not satisfied, then the symmetric linear degrees of freedom can be larger than the value in the statement of Theorem~\ref{thm:dsym}. Consider the $9$-user network depicted in Figure~\ref{fig:ex_top} and note that transmitter $7$ is part of the alignment set at receiver $9$, and is also causing interference at receivers $2$, $4$ and $6$.
While $\frac{\chi(G)+1}{3\chi(G)}=\frac{5}{12}$, we show a scheme that achieves a symmetric DoF of $\frac{3}{7}$. This value of the symmetric DoF can be achieved by using a symbol extension $n=7$ and assigning the sparse subspaces as specified in Figure~\ref{fig:design}.
We explain why almost surely, the decodability condition of~\eqref{eq:decodability} is met at all receivers with one or more interfering transmitter. Since the sparse subspace for the alignment set at receiver $9$ intersects with any of $S_{\{1,2\}}, S_{\{3,4\}}$ and $S_{\{5,6\}}$ in a subspace of dimension at most $1$, the two column vectors in ${\cal B}_7 \cap S_{\{1,3,5\}}$ will almost surely lie outside any of these subspaces, and hence, the decodability conditions at receivers $2$, $4$ and $6$ will be met. Since $S_{\{1,2\}}$, $S_{\{3,4\}}$ and $S_{\{5,6\}}$ do not overlap, the desired signal subspace does not overlap with the interference subspace at each of receivers $1$, $3$ and $5$. Finally, at each of the receivers $1$, $3$, $5$ and $9$, the rank of the interference subspace will be at most $4$ almost surely because~\eqref{eq:reducedalignment} is satisfied with $\tau=2$.  
\end{remk}

\begin{figure}[htb]
\centering
\includegraphics[width=0.5\columnwidth]{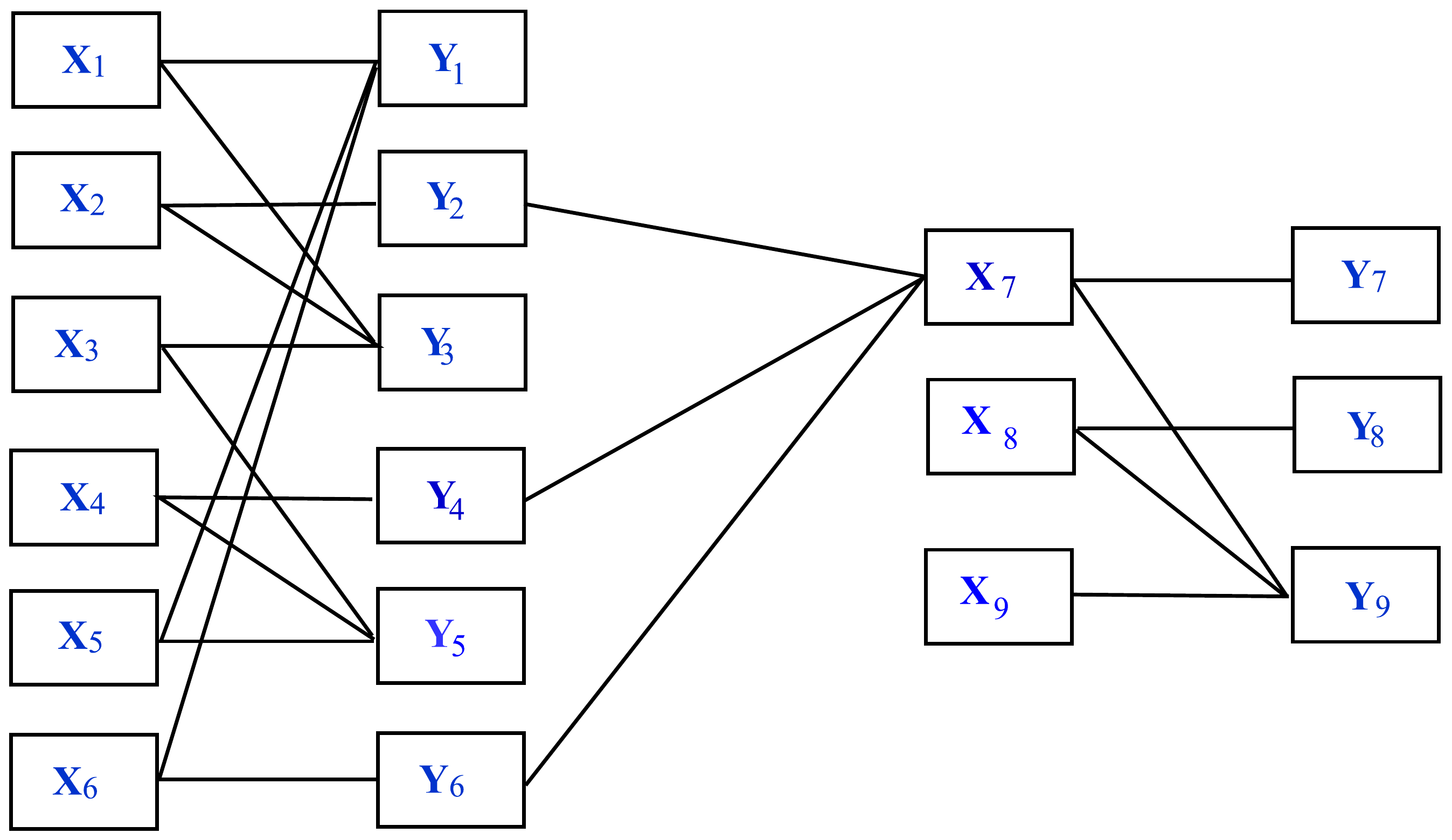}
\caption{An example of a 9-user network where $\frac{3}{7}$ symmetric DoF are achieved but $\frac{\chi(G)+1}{3\chi(G)}=\frac{5}{12}$.}
\label{fig:ex_top}
\end{figure}

\begin{figure}[htb]
\centering
\includegraphics[width=0.8\columnwidth]{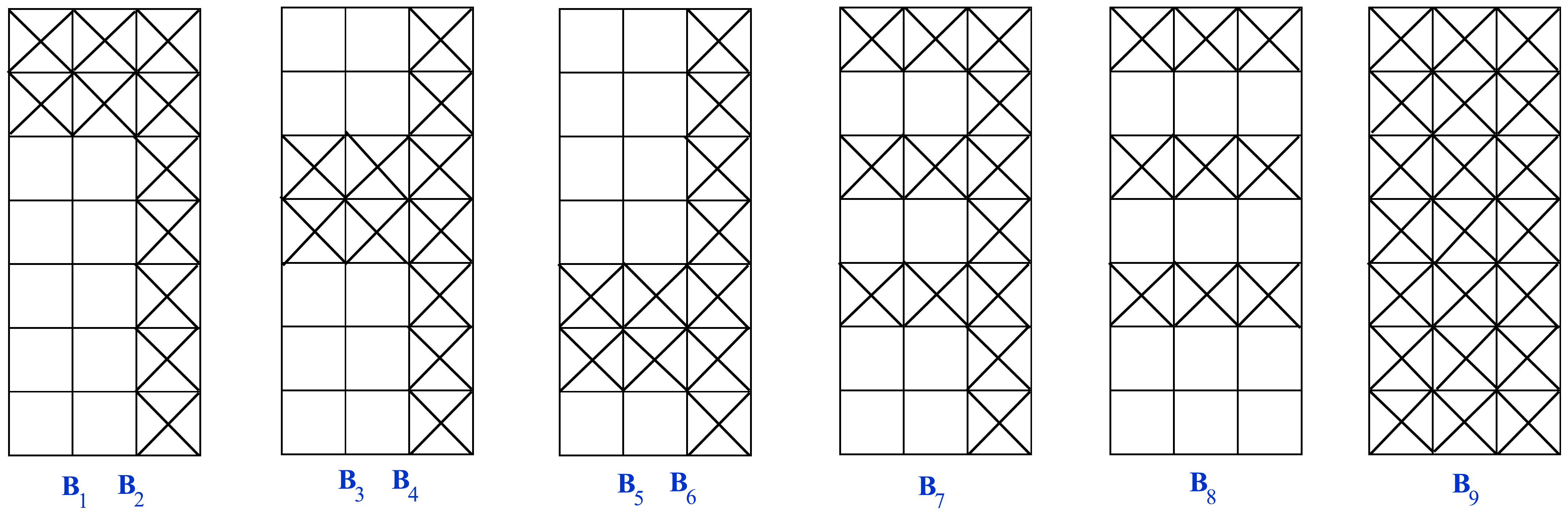}
\caption{The design of beamforming matrices for the example topology of Figure~\ref{fig:ex_top}. Each figure designates the places of the non-zero entries in the corresponding matrices by crossed squares. Each non-zero entry is drawn independently from a continuous distribution, and successful decoding is guaranteed almost surely.}
\label{fig:design}
\end{figure}

\begin{remk}
We believe that the problem of deriving a converse on the linear symmetric DoF for general network topologies can be simplified by describing each beamforming matrix $B_i$ by the number of vectors it has in each possible sparse subspace. For any set $J \subseteq [n]$, we let $\mu_i(J)$ be the number of vectors that $\mathcal{B}_i$ has in the sparse subspace $S_J$. The numbers $\left\{\mu_i(J), J \subseteq [n]\right\}$ have to satisfy the following constraints for each $i \in [K]$:
$\mu_i([n])=nd$, $\mu_i(J_1\cup J_2)\geq \mu_i(J_1)+\mu_i(J_2)-\mu_i(J_1\cap J_2),~  \forall J_1,J_2\subseteq[n]$, $\mu_i(J_1)\geq \mu_i(J_2)-\left(|J_2|-|J_1|\right). ~ \forall J_1\subseteq J_2\subseteq [n]$.
We can then derive an upper bound on the linear symmetric DoF for each value of $n$ through the solution of an optimization problem that relies on this description for beamforming matrices.
We believe that the upper bound obtained through this method is tight for any network topology. The key question here is to validate whether it is true that for any set of numbers $\left\{\mu_i(J), J\subseteq [n]\right\}$ satisfying the stated conditions, there exists an $n \times d$ matrix $B_i$ such that the number of vectors that $\mathcal{B}_i$ occupies in any sparse subspace $S_J, J \subseteq [n]$ is given by $\mu_i(J)$. 

\end{remk}

\section{Conclusion}\label{sec:conclusion}
We characterized necessary and sufficient conditions for almost sure rank loss of a concatenation of full rank matrices with randomly scaled rows. The characterized condition is in terms of the combinatorial structure of the individual matrices with respect to sparse column subspaces. We showed that an almost sure rank loss by a factor $\tau$ is possible if and only if the total number of column vectors in the ensemble in a sparse subspace exceeds the size of the sparse subspace by the same factor $\tau$. 
We then used the result to characterize the linear symmetric DoF for a class of topological interference management problems that was previously studied in~\cite{topology}. In particular, we could identify necessary and sufficient conditions on the network topology, under which, a linear symmetric DoF of $\frac{1}{2}$ is achievable. Further, we characterized the linear symmetric DoF for any network topology with a maximum receiver degree of $3$ and exlusive alignment sets. In general, our result solves an underlying fundamental problem in topological interference management; we are considering for future work how it can be used as a cornerstone to characterize the linear degrees of freedom region for arbitrary network topologies.

\appendices

%\appendix

\section{Proof of Claim~\ref{matroid}}\label{sec:matroid_proof}
Without loss of generality, we consider the case of $i=1$. We first show that $M_{1,X,Y_1}$ is a matroid. To this end, we need to prove the following properties.
\begin{itemize}
\item If $I\subseteq J$ and $J\in \mathcal{I}_{1,X,Y_1}$, then $I\in \mathcal{I}_{1,X,Y_1}$: This is clear since $\mathcal{S}_{I\cup X^c}\subseteq \mathcal{S}_{J\cup X^c}$ and we therefore have
\begin{gather*}
\dim (\mathcal{S}_{I\cup X^c}\cap \mathcal{B}_{1,*,Y_1}) \leq \dim (\mathcal{S}_{J\cup X^c}\cap \mathcal{B}_{1,*,Y_1}) =0 \nonumber\\
\Rightarrow \dim (\mathcal{S}_{I\cup X^c}\cap \mathcal{B}_{1,*,Y_1}) =0.
\end{gather*}

\item If $I\in\mathcal{I}_{1,X,Y_1}$, $J\in\mathcal{I}_{1,X,Y_1}$ and $|I|<|J|$, then $\exists j\in J \setminus I$ s.t. $I\cup \{j\}\in\mathcal{I}_{1,X,Y_1}$:
Assume $I=\{i_1,...,i_k\}$ and $J=\{i_1,...,i_{k'},j_1,...,j_l\}$ where we have $k'\leq k$ and $k'+l>k$. Using this notation, we have that $J\setminus I=\{j_1,...,j_l\}$. Suppose that this property is not true. This implies that adding any of the $e_{j_n}$'s to $I$ will force it to lie outside the set $\mathcal{I}_{1,X,Y_1}$ (where $e_{j_n}$ denotes the $j_n^{th}$ standard basis vector). If $ X^c=\{x_1,...,x_m\}$, this is equivalent to the fact that there exist coefficients $\lambda_{ij}$ ($i\in [l],j\in[k]$), $\beta_{ij}$ ($i\in[l],j\in[m]$) and $\mu_i\neq 0$ ($i\in[l]$) such that
\begin{align}
\lambda_{11}e_{i_1}+...+\lambda_{1k}e_{i_k}+\beta_{11}e_{x_1}+...+\beta_{1m} e_{x_m}+  mu_{1}e_{j_1}&=v_1\label{coeffs1}\\
\lambda_{21}e_{i_1}+...+\lambda_{2k}e_{i_k}+\beta_{21}e_{x_1}+...+\beta_{2m} e_{x_m}+mu_{2}e_{j_2}&=v_2\\
\vdots~~~~~~~~~~~~~~~&\nonumber\\
\lambda_{l1}e_{i_1}+...+\lambda_{lk}e_{i_k}+\beta_{l1}e_{x_1}+...+\beta_{lm} e_{x_m}+\mu_{l}e_{j_l}&=v_l\label{coeffs3},
\end{align}
where $v_1,...,v_l$ belong to $\mathcal{B}_{1,*,Y_1}$. Now, consider the following vectors
\begin{align}
\overrightarrow{\lambda_i}=[\lambda_{i,k'+1}~ \lambda_{i,k'+2}~ ...~ \lambda_{ik}]^T,~i\in[l].
\end{align}

There are $l$ of these vectors in $\mathbb{R}^{k-k'}$ and since we assumed that $k'+l>k$, these vectors should be linearly dependent. This implies that there exist not-all-zero coefficients $a_i$, $i\in\{1,...,l\}$ such that $\sum_{i=1}^l a_i \overrightarrow{\lambda_i}=\overrightarrow{0}$. Multiplying each $a_i$ by the corresponding equation in (\ref{coeffs1})-(\ref{coeffs3}) and then adding the resulting equations yields
\begin{align}\label{lincomb}
\sum_{n=1}^{k'}\left(\sum_{i=1}^l a_i \lambda_{in}\right) e_{i_n} + \sum_{n=1}^{m}\left(\sum_{i=1}^l a_i \beta_{in}\right) e_{x_n} + \sum_{i=1}^l (a_i \mu_i) e_{j_i}=\sum_{i=1}^l a_i  v_i,
\end{align}
where the vectors in (\ref{lincomb}) are non-zero since the coefficients $a_i$ are not all zeros and all the coefficients $\mu_i$ are non-zero. The RHS of (\ref{lincomb}) is a vector in $\mathcal{B}_{1,*,Y_1}$, which implies that there exists a linear combination of the bases of $S_{J\cup X^c}$ which lies in $\mathcal{B}_{1,*,Y_1}$, therefore contradicting the fact that $J\in\mathcal{I}_{1,X,Y_1}$. Hence, this property is also true.
\end{itemize}

Having proven the above properties, it is now verified that 
$M_{1,X,Y_1}=(X,\mathcal{I}_{1,X,Y_1})$ is a matroid.

To complete the proof of Claim \ref{matroid}, we need to show that the rank function of $M_{1,X,Y_1}$ is equal to $r_{1,X,Y_1}(J)=|J|-\dim (\mathcal{S}_{J\cup X^c}\cap \mathcal{B}_{1,*,Y_1}) , J\subseteq X$. We do so through the following two steps.

\begin{itemize}
\item For any $I\subseteq J$ such that $\dim (\mathcal{S}_{I\cup X^c} \cap \mathcal{B}_{1,*,Y_1}) =0$, we have that $(\mathcal{S}_{J\cup X^c} \cap \mathcal{S}_{I\cup X^c} )  + (\mathcal{S}_{J\cup X^c}  \cap \mathcal{B}_{1*,Y_1})  \subseteq \mathcal{S}_{J\cup X^c}$.\footnote{For two subspaces $\mathcal{U}$ and $\mathcal{V}$, $\mathcal{U}+\mathcal{V}$ stands for $\text{span}(\mathcal{U}\cup\mathcal{V})$.} Therefore, since $\dim (\mathcal{S}_{I\cup X^c} \cap \mathcal{B}_{1,*,Y_1}) =0$, we will get $\dim (\mathcal{S}_{J\cup X^c} \cap \mathcal{S}_{I\cup X^c} )  + \dim (\mathcal{S}_{J\cup X^c}  \cap \mathcal{B}_{1,*,Y_1})  \leq \dim(\mathcal{S}_{J\cup X^c} ) $. But $\mathcal{S}_{J\cup X^c} \cap \mathcal{S}_{I\cup X^c} =\mathcal{S}_{I\cup X^c} $. Hence, $\dim (\mathcal{S}_{I\cup X^c} )  \leq   \dim(\mathcal{S}_{J\cup X^c} ) - \dim (\mathcal{S}_{J\cup X^c}  \cap \mathcal{B}_{1,*,Y_1}) $, implying that $r_{1,X}(J)\leq |J|-\dim (\mathcal{S}_{J\cup X^c} \cap \mathcal{B}_{1,*,Y_1}) $.

\item

Now, we only need to show that there exists some $K\subseteq J $ for which $|K| =   |J |- \dim (\mathcal{S}_{J\cup X^c}  \cap \mathcal{B}_{1,*,Y_1}) $ and $\dim (\mathcal{S}_{K\cup X^c}  \cap \mathcal{B}_{1,*,Y_1}) =0$. By definition, vectors $\{e_i: i\in J\cup X^c\}$ form a basis for the vector space $\mathcal{S}_{J\cup X^c} $.  Suppose that the vectors in $\{f_1,...,f_d\}$ form a basis for the space $\mathcal{S}_{J\cup X^c} \cap \mathcal{B}_{1,*,Y_1}$. Therefore, since $\dim(\mathcal{S}_{X^c}\cap \mathcal{B}_{1,*,Y_1})=0$, the vectors in $\{f_1,...,f_d\} \cup\{e_i:i\in X^c\}$ form a basis for 
$\mathcal{S}_{J\cup X^c} \cap \mathcal{B}_{1,*,Y_1} + \mathcal{S}_{X^c}$. Then, by the Steinitz exchange lemma, there exists a subset $K\subseteq J$ such that the vectors in $\{f_1,...,f_d\}\cup\{e_i:i\in K\cup X^c\}$ form a basis for the space $\mathcal{S}_{J\cup X^c}$. This implies that $\mathcal{S}_{K}$ has no intersection with $\mathcal{S}_{J\cup X^c} \cap \mathcal{B}_{1,*,Y_1} + \mathcal{S}_{X^c}$ and therefore with $\mathcal{B}_{1,*,Y_1}$ (except for the zero vector) and $|K| =   |J\cup X^c|- \dim (\mathcal{S}_{J\cup X^c} \cap \mathcal{B}_{1,*,Y_1} + \mathcal{S}_{X^c})=|J|-\dim (\mathcal{S}_{J\cup X^c}\cap \mathcal{B}_{1,*,Y_1}) $, suggesting that $K$ is the actual desired subset of $J$.

%By definition, vectors $\{e_i: i\in J\cup X^c\}$ form a basis for the vector space $S_{J\cup X^c} $.  Suppose that the vectors in $\{f_1,...,f_d\}$ form a basis for the space
% $S_J \cap \mathcal{B}_1$ where $d=\dim (S_J \cap \mathcal{B}_1)$. Therefore, the vectors in $\{f_1,...,f_d\} \cup\{e_i:i\inX^c\}$ form a basis for $S_{J} \cap \mathcal{B}_1 + S_{X^c}$. Then, by the Steinitz exchange lemma \cite{exlem}, there exists a subset $K\subseteq J$ such that the vectors in $\{f_1,...,f_d\}\cup\{e_i:i\in X^c\}\cup\{e_i:i\in K\}$ form a basis for the space $S_{J\cup X^c}$. This implies that $S_K$ has no intersection with $S_J\cap \mathcal{B}_1$ and therefore with $\mathcal{B}_1$ (except for the zero vector) and $|K| =   |J|- \dim (S_J \cap \mathcal{B}_1) $, suggesting that $K$ is the actual desired subset of $J$.
\end{itemize}

The above two steps imply that $r_{1,X,Y_1}(J)=|J|-\dim (\mathcal{S}_{J\cup X^c}\cap \mathcal{B}_{1,*,Y_1}) , J\subseteq X$. The same arguments hold for any $M_{i,X,Y_i}=(X,\mathcal{I}_{i,X,Y_i}), i\in[K]$ as well. This completes the proof.

\section{Proof of Lemma~\ref{lem:non_overlapping}}\label{app:non_overlapping}
Fix a design for the transmit beamforming matrices $B_i: i \in [K]$ such that~\eqref{eq:alignment} holds. For each $r \in [K]: |I_r| = 2$, fix a subset $J_r \subseteq [n]$ such that the condition in~\eqref{eq:alignment} is satisfied for the chosen subsets; it is easy to verify that $|J_r| \geq \tau$ for any selected subset. We then set $J_r = \phi$ for every $r \in [K]$ such that $|I_r| < 2$. 

For each $r \in [K]$ such that $I_r=\{r_1,r_2\}$, we choose the new beamforming matrices $B_{r1}^{(\text{new})}$ and $B_{r2}^{(\text{new})}$ as follows. let $J'_r=J_r \backslash \left(J_{r1} \cup J_{r2}\right)$ and $\mathcal{S}'_r=\left(\mathcal{S}_{J_r} \backslash \mathcal{S}_{J_{r1}} \cup \mathcal{S}_{J_{r2}}\right)$; since \eqref{eq:alignment} is satisfied, we know that the following holds,
\begin{equation}
\dim(\mathcal{S}'_r \cap \mathcal{B}_{r1})+\dim(\mathcal{S}'_r \cap \mathcal{B}_{r2}) \geq |J'_r|+\tau,
\end{equation}
and hence, we also know that $|J'_r| \geq \tau$. For $i \in \{1,2\}$, let $B'_{ri}$ be an $n \times nd$ matrix with $nd$ columns forming a basis for $B_{ri}$ and a subset of these columns form a basis for $\mathcal{S}'_r \cap \mathcal{B}_{ri}$; we then fix $J_r^{(\text{new})} \subseteq J'_r: |J_r^{(\text{new})}|=\tau$ and replace the columns that form a basis for $\mathcal{S}'_r \cap \mathcal{B}_{ri}$ with an equal number of linearly independent columns in $\mathcal{S}'_r$ and $\tau$ of the new columns form a basis for $\mathcal{S}_{J_r^{(\text{new})}}$ to construct the matrix $B_{ri}^{(\text{new})}$ from $B'_{ri}$.

After performing the above step, it is straightforward to verify that the following holds,
\begin{gather}
\forall r \in [K]: I_r=\{r_1,r_2\}, \nonumber\\\mathcal{S}_{J_r^{(\text{new})}} \subseteq \mathcal{B}_{r1}^{(\text{new})} \cap \mathcal{B}_{r2}^{(\text{new})}, \dim\left(\mathcal{S}_{J_r^{(\text{new})}} \cap \mathcal{B}_r^{(\text{new})}\right)=0.\label{eq:alignmentsatisfied}
\end{gather}
and hence, the new beamforming matrices satisfy the condition in~\eqref{eq:newalignment}.

\bibliographystyle{IEEEtran}
{\footnotesize
\bibliography{navid}}

\end{document}